\documentclass[11pt]{article}

\usepackage[pagebackref,colorlinks]{hyperref}
\usepackage{amsmath} 
\usepackage{amsthm} 
\usepackage{thmtools}
\usepackage{amssymb}	
\usepackage{graphicx} 
\usepackage{multicol} 
\usepackage{multirow}
\usepackage{color}
\usepackage{bm}
\usepackage{enumitem}
\usepackage[dvips,letterpaper,margin=1in,bottom=1in]{geometry}
\usepackage[capitalize,noabbrev]{cleveref}
\usepackage{booktabs}

\usepackage[utf8]{inputenc}
\usepackage[english]{babel}

\usepackage{mathtools}

\newtheorem{theorem}{Theorem}[section]

\newtheorem{lemma}[theorem]{Lemma}

\newtheorem{remark}{Remark}[section]

\newcommand{\braket}[2]{\left< #1 \vphantom{#2} \middle| #2 \vphantom{#1} \right>} 
\newcommand{\ketbra}[2]{\ensuremath{\ket{#1}\!\bra{#2}}}

\DeclarePairedDelimiter\rbra{\lparen}{\rparen}
\DeclarePairedDelimiter\sbra{\lbrack}{\rbrack}
\DeclarePairedDelimiter\cbra{\{}{\}}
\DeclarePairedDelimiter\abs{\lvert}{\rvert}
\DeclarePairedDelimiter\Abs{\lVert}{\rVert}
\DeclarePairedDelimiter\ceil{\lceil}{\rceil}

\DeclarePairedDelimiter\ket{\lvert}{\rangle}
\DeclarePairedDelimiter\bra{\langle}{\rvert}

\newcommand{\tr} {\operatorname{tr}}
\newcommand{\E} {\mathbf{E}}

\newcommand{\Real} {\operatorname{Re}}

\DeclarePairedDelimiter\parens{\lparen}{\rparen}
\DeclarePairedDelimiter\norm{\lVert}{\rVert}
\DeclarePairedDelimiter\braces{\lbrace}{\rbrace}

\newcommand{\calE}{\mathcal{E}}

\newcommand{\calI}{\mathcal{I}}

\usepackage{algorithm}
\usepackage{algpseudocode}

\usepackage{tabularx}
\usepackage{booktabs}
\usepackage{threeparttable}
\usepackage{adjustbox} 

\usepackage{tikz}
\usetikzlibrary{quantikz2}

\title{Strict Hierarchy for Quantum Channel Certification to Unitary}
\author{Kean Chen\thanks{\url{keanchen.gan@gmail.com}}\and
Qisheng Wang\thanks{\url{QishengWang1994@gmail.com}} \and 
Zhicheng Zhang\thanks{\url{iszczhang@gmail.com}}}
\date{}

\begin{document}

\maketitle

\begin{abstract}
We consider the problem of quantum channel certification to unitary, where one is given access to an unknown $d$-dimensional channel $\mathcal{E}$, and wants to test whether $\mathcal{E}$ is equal to a target unitary channel or is $\varepsilon$-far from it in the diamond norm. 
We present optimal quantum algorithms for this problem, settling the query complexities in three access models with increasing power.
Specifically, we show that:
\begin{enumerate}[label=(\roman*)]
    \item $\Theta(d/\varepsilon^2)$ queries suffice for incoherent access model, matching the lower bound due to \hyperlink{cite.FFGO23}{Fawzi, Flammarion, Garivier, and Oufkir (COLT 2023)}.
    \item $\Theta(d/\varepsilon)$ queries suffice for coherent access model, matching the lower bound due to \hyperlink{cite.RS08}{Regev and Schiff (ICALP 2008)}. 
    \item $\Theta(\sqrt{d}/\varepsilon)$ queries suffice for source-code access model, matching the lower bound due to \hyperlink{cite.JO26}{Jeon and Oh (\textit{npj Quantum Inf.}\ 2026)}. 
\end{enumerate}
This demonstrates a strict hierarchy of complexities for quantum channel certification to unitary across various access models.
\end{abstract}

\section{Introduction}
A fundamental task in quantum computing and quantum information is
to certify whether a quantum device implements a desired operation. 
The certification of quantum objects plays an important role in quantum property testing \cite{MdW16}, including the certification of quantum states \cite{OW21,BOW19,chen2022toward,CLHL22}, quantum Hamiltonians \cite{GJW+26,LS25}, and quantum channels \cite{FFGO23,RAS+24,JO26}.

In this paper, we consider quantum channel certification.
In principle, one can learn a full classical description of an unknown quantum channel via channel tomography~\cite{haah2023query,Oufkir_2023,oufkir2023adaptivity,surawy2022projected,RAS+24,yoshida2025quantum,mele2025optimal,chen2025quantum}. However, full tomography requires far more resources than the certification task and is generally inefficient.
A more natural and widely studied goal is to decide only whether the device meets a given specification or is far from it.

\paragraph{Quantum channel certification to unitary.}
More formally, we study the problem of \emph{quantum channel certification to unitary}~\cite{FFGO23}.\footnote{The work \cite{FFGO23} considers quantum channel certification to identity, which is essentially the same task as quantum channel certification to unitary considered in this paper. In addition, they also consider the certification in the trace norm, a different closeness measure of quantum channels.}
Given a known unitary channel $\mathcal{U}(\rho)=U\rho U^\dagger$ and query access to an unknown quantum channel $\mathcal{E}$,
the goal is to decide between
\[
\textup{Case 1}\colon \mathcal{E}=\mathcal{U}
\qquad\text{and}\qquad
\textup{Case 2}\colon \|\mathcal{E}-\mathcal{U}\|_\diamond \ge \varepsilon,
\]
with success probability $\geq 2/3$, while minimizing the number of queries to $\mathcal{E}$.
Here, we consider the certification in the diamond norm $\norm*{\cdot}_{\diamond}$, which captures the strongest black-box notion of
closeness to a target operation.

\paragraph{Access models and a strict complexity hierarchy.}
A key lesson from recent work~\cite{BOW19,huang2020predicting,bubeck2020entanglement,buadescu2021improved,huang2021information,aharonov2022quantum,chen2022exponential,chen2022toward,chen2023unitarity,FFGO23,chen2024local,TW25,tang2025conjugate} in quantum learning and testing is that is that the access model for the unknown object matters: quantum memory, (in)coherent controls, and variants of oracle queries can yield provably different complexity results. In this paper, we consider three standard access models with increasing power: incoherent access, coherent access, and source-code access.

\begin{enumerate}
  \item \textbf{Incoherent access} (cf.\ \cite{chen2022exponential}) corresponds to quantum algorithms without quantum memory. More precisely,
    after each query to $\mathcal{E}$, the algorithm must perform a measurement; only classical information may be retained and used in subsequent queries.
    Formally, an incoherent algorithm using $T$ queries to $\mathcal{E}$ has the following behavior: When using the $i$-th query to $\mathcal{E}$ for $1 \leq i \leq T$, the algorithm prepares a quantum state $\rho_i$ according to the classical information $C_1, C_2, \dots, C_{i-1}$ obtained previously, and applies $\mathcal{E}$ to $\rho_i$, followed by a measurement with outcome $C_i$. 
    
  \item \textbf{Coherent access}, contrast to incoherent access, corresponds to quantum algorithms with quantum memory. The algorithm can perform an arbitrary joint quantum
        computation across multiple queries to $\mathcal{E}$.
        
  \item \textbf{Source-code access} (cf.\ \cite{KO23}) corresponds to coherent access and, additionally, access to the ``source code'' for implementing the target channel $\calE$.
  Here, the ``source code'' describes a quantum circuit $W$ that implements the unknown oracle (in our case, the channel $\calE$).
    More precisely, the algorithm can query the unitary operator $W$ (and its inverse $W^\dagger$) such that $\tr_{\mathsf{B}}\parens*{W(\rho_{\mathsf{A}}\otimes \ket{0}\!\bra{0}_{\mathsf{B}}) W^\dagger}=\calE(\rho_{\mathsf{A}})$ for any input state $\rho_{\mathsf{A}}$,
  where $\mathsf{A}$ is the main system and $\mathsf{B}$ is the environment.
\end{enumerate}

The main result of this paper is a strict complexity hierarchy of complexities for quantum channel certification to unitary across the above three models, showing a \textit{strict increase in power} from incoherent to coherent to source-code access.

\begin{table}[t]
\centering
\begin{tabular}{lccc}
\toprule
\textbf{Access Model} & \textbf{Query Complexity} & \textbf{Upper Bounds} & \textbf{Lower Bounds} \\
\midrule
\addlinespace
Incoherent Access & $\Theta(d/\varepsilon^{2})$ & \Cref{thm-222140} &\cite{FFGO23} \\
\addlinespace
Coherent Access & $\Theta(d/\varepsilon)$ & \Cref{thm-230318} &\cite{RS08} \\
\addlinespace
Source-Code Access & $\Theta(\sqrt{d}/\varepsilon)$ & \Cref{thm-1526} &\cite{JO26} \\
\bottomrule
\end{tabular}
\caption{Query complexity of certifying $\mathcal{E}=\mathcal{U}$ versus $\|\mathcal{E}-\mathcal{U}\|_\diamond\ge\varepsilon$
under three access models.}
\label{tab:main}
\end{table}

\subsection{Main results}
In this paper, we provide efficient algorithms for quantum channel certification to unitary in each of the three access models mentioned above.
Combining with the existing lower bounds in \cite{FFGO23,RS08,JO26}, we completely characterize the query complexity of this problem.

\begin{theorem}[Informal, \cref{thm-222140,thm-230318,thm-1526} restated] \label{thm-0220}
Let $\mathcal{E}$ be an unknown $d$-dimensional quantum channel and $\mathcal{U}$ be a known
$d$-dimensional unitary channel. The optimal number of queries needed to certify
$\mathcal{E}=\mathcal{U}$ versus $\|\mathcal{E}-\mathcal{U}\|_\diamond\ge \varepsilon$ is:
\begin{enumerate}
  \item $\Theta(d/\varepsilon^2)$ in the incoherent access model;
  \item $\Theta(d/\varepsilon)$ in the coherent access model;
  \item $\Theta(\sqrt{d}/\varepsilon)$ in the source-code access model.
\end{enumerate}
\end{theorem}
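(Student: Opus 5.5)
Since the lower bounds are already available from \cite{FFGO23,RS08,JO26}, the plan is to prove only the three upper bounds. First I reduce to certification to identity: replace $\mathcal{E}$ by $\mathcal{U}^\dagger\circ\mathcal{E}$, which costs no extra queries and preserves the diamond distance. I then write the channel in Kraus form $\mathcal{E}(\rho)=\sum_k A_k\rho A_k^\dagger$. The key quantity is the Choi fidelity with the identity,
\[
F=\bra{\Phi}(\mathcal{E}\otimes\mathrm{id})(\ketbra{\Phi}{\Phi})\ket{\Phi}=\frac{1}{d^2}\sum_k\abs{\tr A_k}^2 ,
\]
where $\ket{\Phi}$ is maximally entangled. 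The standard Fuchs--van de Graaf plus Choi-to-diamond conversion gives $\norm{\mathcal{E}-\mathrm{id}}_\diamond\le 2d\sqrt{1-F}$, which is too lossy. A sharper conversion comes from the diamond distance to a unitary. If $\norm{\mathcal{E}-\mathrm{id}}_\diamond\ge\varepsilon$, some input $\ket\psi$ has $1-\bra\psi\mathcal{E}(\ketbra\psi\psi)\ket\psi\gtrsim\varepsilon^2$ (allowing an ancilla; for a unitary target the worst case reduces to the fidelity of pure states). Averaging over Haar-random $\ket\psi$ alone gives only $1-\bar F_{\text{avg}}\gtrsim\varepsilon^2/d$. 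So the first structural lemma I would prove is: $\varepsilon$-far implies $1-F\gtrsim\varepsilon^2/d$, and $\varepsilon$-far also implies there exist detectable ``rejection amplitudes'' of size $\sqrt{1-F}\gtrsim\varepsilon/\sqrt d$.

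For \emph{incoherent access}, I prepare $\ket\Phi$, apply $\mathcal{E}\otimes\mathrm{id}$, and measure $\{\ketbra\Phi\Phi, I-\ketbra\Phi\Phi\}$. In Case 1 the test never rejects. In Case 2 it rejects with probability $1-F\gtrsim\varepsilon^2/d$, so $O(d/\varepsilon^2)$ repetitions suffice, with one-sided error. Because the test is perfectly complete, this bound comes from Bernoulli hitting rather than from estimation.

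For \emph{coherent access} the aim is a quadratic speedup, taking $1/(1-F)$ to $1/\sqrt{1-F}$ only in the $\varepsilon$ dependence, giving $d/\varepsilon$. Amplitude amplification needs a unitary, and here we have only the channel. So I would use a Regev--Schiff-style sequential test. Start with $\ket\Phi$ and repeatedly apply $\mathcal{E}$ followed by a small rotation / projection toward $\ket\Phi$, giving $T$ rounds of ``quantum Zeno-type'' accumulation. Alternatively, the most robust approach is to use the channel's Stinespring dilation implicitly by applying $\mathcal{E}$ to fresh halves of coherent superpositions. A cleaner route: use the lemma that $\varepsilon$-far means some pure $\ket\psi$ has disturbance $\gtrsim\varepsilon$ in \emph{trace distance} (not squared). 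A random input hits a good direction only with $1/d$ weight, so one applies $\mathcal{E}$ $\Theta(1/\varepsilon)$ times in sequence on the same state so the deviations add coherently (unitary-like components rotate by angle $\sim\varepsilon$ per step, accumulating to $\Theta(1)$), and repeats this $O(d)$ times with random inputs. Total cost is $O(d/\varepsilon)$. The main obstacle is here. A non-unitary $\mathcal{E}$ need not accumulate coherently (decoherence components add only quadratically), so I must case-split. In the near-unitary part, $\mathcal{E}\approx\mathcal{V}$ with $V$ having an eigenphase spread $\gtrsim\varepsilon$, and iterating $\mathcal{V}^{1/\varepsilon}$ with random starting states gives detection probability $\Omega(1/d)$. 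In the strongly non-unitary part, the Kraus weight off the dominant unitary is $\gtrsim\varepsilon$ (not $\varepsilon^2$), so a single use already rejects with probability $\gtrsim\varepsilon/d$. I would handle both cases with one schedule of iteration lengths $2^j\le 1/\varepsilon$.

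For \emph{source-code access} I have the unitary $W$ with $W^\dagger$, so the acceptance amplitude $\sqrt F$ of the Choi test is a genuine amplitude $\bra{\Phi,0}(I\otimes\ldots)W\ket{\Phi,0}$ after projecting the environment. One subtlety is that the environment can end anywhere, so I use the reflection about $\{\ket\Phi\}\otimes\mathcal{H}_B$ to measure $\sum_k\abs{\tr A_k}^2/d^2$. I then apply amplitude estimation / fixed-point amplitude amplification to distinguish $F=1$ from $1-F\gtrsim\varepsilon^2/d$ using $O(1/\sqrt{\varepsilon^2/d})=O(\sqrt d/\varepsilon)$ queries to $W,W^\dagger$. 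Perfect completeness again allows one-sided amplification without estimation precision issues.
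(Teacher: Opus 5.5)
\textbf{Parts (i) and (iii): the structural lemma is asserted, not proved.} Your incoherent and source-code algorithms are the paper's: the one-sided Choi test repeated $O(d/\varepsilon^2)$ times, and amplitude estimation of $\sqrt{1-F}$ to precision $\Theta(\varepsilon/\sqrt d)$. Both rely on the lemma that $\|\mathcal{E}-\mathcal{U}\|_\diamond\ge\varepsilon$ forces $1-F\gtrsim\varepsilon^2/d$, and the route you sketch does not establish it.

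Fuchs--van de Graaf does give one (possibly entangled) input with infidelity at least $\varepsilon^2/4$. But averaging alone cannot turn this into $1-\bar F\gtrsim\varepsilon^2/d$. The infidelity is a degree-two polynomial in $\ketbra{\psi}{\psi}$, and nonnegative functions of that kind (e.g.\ $\abs{\braket{\phi}{\psi}}^4$) can have max-to-average ratio of order $d^2$. What you need is an upper bound on the infidelity of \emph{every} input in terms of $1-F$, and that requires the Kraus structure.

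The paper gets this bound from the Stinespring continuity bound $\|\mathcal{F}-\mathcal{I}\|_\diamond\le 2\inf_{\eta}\|V-W_\eta\|$ (Kretschmann--Schlingemann--Werner). It then bounds the operator norm by the Frobenius norm with $\ket{\eta_\star}\propto\sum_k\tr(A_k)\ket{k}$, so that $\|V-W_{\eta_\star}\|^2\le 2d(1-\sqrt F)$.

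Your own idea can also be completed directly. Set $c_k=\tr(A_k)/(d\sqrt F)$ and $A=\sum_k\overline{c_k}A_k$. Then $\tr A=d\sqrt F$ and $\|A\|\le1$, and by Cauchy--Schwarz every input has fidelity at least $\abs{\bra{\psi}(A\otimes I)\ket{\psi}}^2$. Since $I-(A+A^\dagger)/2\succeq0$ has trace $d(1-\sqrt F)$, every input has infidelity at most $2d(1-\sqrt F)$. Combined with your $\varepsilon^2/4$, this gives $\sqrt F\le1-\varepsilon^2/(8d)$.

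\textbf{Part (ii): the coherent bound has the more serious gap.} You worry that non-unitary deviations fail to accumulate, and propose a case split: near-unitary with eigenphase spread $\gtrsim\varepsilon$, versus Kraus weight $\gtrsim\varepsilon$ off a dominant unitary. Three things are missing:
\begin{itemize}
\item Neither branch is proved.
\item The mixed regime is not handled. Approximating $\mathcal{E}$ by a unitary $\mathcal{V}$ costs up to $n\|\mathcal{E}-\mathcal{V}\|_\diamond$ after $n$ iterations, which is comparable to the $n\varepsilon$ signal.
\item The common schedule is never analysed.
\end{itemize}

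The worry is also unfounded, because the diamond distance grows linearly for \emph{every} channel. Write $\mathcal{F}=\mathcal{U}^{-1}\circ\mathcal{E}=\mathcal{I}+\Delta$ with $\|\Delta\|_\diamond=\eta$. The expansion $\mathcal{F}^n-\mathcal{I}=n\Delta+\sum_{i\ge2}\binom{n}{i}\Delta^i$ and submultiplicativity $\|\Delta^i\|_\diamond\le\eta^i$ give $\|\mathcal{F}^n-\mathcal{I}\|_\diamond\ge n\eta-(e^{n\eta}-1-n\eta)>n\eta/2$ whenever $n\eta\le1/2$.

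This is the missing lemma, and with it no case analysis is needed. Run the incoherent tester at constant precision $1/8$ on $\mathcal{F}^{2^j}$ for $j=0,\dots,T=\lceil\log(1/\varepsilon)+1\rceil$.
\begin{itemize}
\item Completeness stays perfect, since $\mathcal{F}^{2^j}=\mathcal{I}$ in the yes case.
\item Take the largest $j$ with $2^j\le1/(2\eta)$, or $j=0$ if $\eta>1/2$. Then $\mathcal{F}^{2^j}$ is more than $1/8$-far from $\mathcal{I}$, so $O(d\log(1/\delta_j))$ uses of it suffice.
\item With $\delta_j=\delta2^{j-T-1}$, the total cost is $\sum_j2^j\,O(d\log(1/\delta_j))=O(d\log(1/\delta)/\varepsilon)$.
\end{itemize}

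A minor point on the lower bounds: Regev and Schiff prove a bound for faulty Grover search, not for certification. The $\Omega(d/\varepsilon)$ certification bound therefore needs the short reduction showing that the $p$-faulty oracle channel with $p=1-\varepsilon/2$ is at least $\varepsilon$-far from $\mathcal{I}$, which you can check on the input $(\ket{k}+\ket{j})/\sqrt2$.
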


All three bounds in \cref{thm-0220} are tight.
For incoherent access, the lower bound is due to Fawzi, Flammarion, Garivier, and Oufkir~\cite{FFGO23}, and our upper bound improves their $O\rbra{d/\varepsilon^4}$.
For coherent access, the lower bound follows from the proof and the hard instance in work of Regev and
Schiff~\cite{RS08}.
For source-code access, our upper bound matches the lower bound recently established by Jeon and Oh~\cite{JO26}.
Together, these results exhibit a \emph{strict} complexity hierarchy for the same certification task under
increasingly powerful access to the unknown channel.

\begin{remark}
    In \cite[Theorem 3]{JO26}, they considered a similar and related problem of \textit{unitary channel certification to identity}, where the task is to determine whether $\mathcal{U} = \mathcal{I}$ or $\Abs{\mathcal{U} - \mathcal{I}}_\diamond \geq \varepsilon$ for an unknown unitary channel $\mathcal{U}$, given query access to $\mathcal{U}$ and $\mathcal{U}^{-1}$. They provided a quantum algorithm for this problem with query complexity $\Theta\rbra{\sqrt{d}/\varepsilon}$. 
    In comparison, our result in the source-code access model implies their upper bound, as our problem ``quantum channel certification to unitary'' is more general: the unknown channel $\mathcal{E}$ is not guaranteed to be unitary. 
\end{remark}

\subsection{Techniques}

We begin with our optimal approach for incoherent access, based on which we further obtain our approaches for coherent and source-code accesses. 

\paragraph{Incoherent access.}

We leverage the Choi--Jamio{\l}kowski isomorphism: applying
    $(\mathcal{U}^{-1}\circ \mathcal{E})\otimes \mathcal{I}$ to the maximally entangled state $\ket{\Phi}$ and measuring the
    projector onto $\ket{\Phi}$ yields a Bernoulli random variable $X$ with expectation $\E\sbra{X} = \mathrm{F}_{\textup{ent}}\rbra{\mathcal{E}, \mathcal{U}}$, where $\mathrm{F}_{\textup{ent}}\rbra{\mathcal{E}, \mathcal{U}}$ is the \textit{entanglement fidelity} \cite{Nie02} (or \textit{channel fidelity}~\cite{raginsky2001fidelity}) between $\mathcal{E}$ and $\mathcal{U}$.\footnote{For general quantum channels, the entanglement fidelity is defined as $\mathrm{F}_{\textup{ent}}\rbra{\mathcal{E}, \mathcal{F}}\coloneqq \mathrm{F}((\mathcal{E}\otimes \mathcal{I})(\ketbra{\Phi}{\Phi}), (\mathcal{F}\otimes \mathcal{I})(\ketbra{\Phi}{\Phi}))$, in which $\mathrm{F}$ denotes the (squared) fidelity between quantum states.} between $\mathcal{E}$ and $\mathcal{U}$.
    To this end, we present a quantitative link between the entanglement fidelity and the diamond norm (in \cref{lemma-1181906}):
\[
\mathrm{F}_{\textup{ent}}\rbra{\mathcal{E}, \mathcal{U}} \leq \sqrt{\mathrm{F}_{\textup{ent}}\rbra{\mathcal{E}, \mathcal{U}}} \leq 1 - \frac{1}{8d} \Abs{\mathcal{E} - \mathcal{U}}_\diamond^2.
\]
    Further analysis shows that (i) $\E\sbra{X} = 1$ when $\mathcal{E} = \mathcal{U}$ while (ii) $\E\sbra{X} \leq 1 - \Theta\rbra{{\varepsilon^2}/{d}}$ when $\Abs{\mathcal{E}-\mathcal{U}}_\diamond \geq \varepsilon$. 
    To distinguish the two cases, it suffices to use $O\rbra{d/\varepsilon^2}$ samples of $X$, which gives the upper bound. 

    In comparison, the previous work~\cite{FFGO23} considers the certification task in the incoherent access and ancilla-free setting.
    They obtained upper bounds $O(d/\varepsilon^2)$ and $O(d/\varepsilon^4)$ for trace norm and diamond norm certifications, respectively. They also showed that the upper bound $O(d/\varepsilon^2)$ for trace norm certification is optimal by establishing a matching lower bound $\Omega(d/\varepsilon^2)$, which holds even for ancilla-assisted algorithms.
    We note that in our paper, we consider the incoherent access model but allowing ancilla qubits, i.e., the input to the unknown channel can be an entangled state on the main system and an ancilla system.

\paragraph{Coherent access.}

Inspired by \cite{haah2023query}, our approach for coherent access is obtained by \emph{bootstrapping} the $\varepsilon$ dependence in the incoherent approach. 
First, we note that certifying $\mathcal{E}$ to $\mathcal{U}$ can be reduced to certifying $\mathcal{U}^{-1}\circ\mathcal{E}$ and $\mathcal{I}$. 
Then, instead of measuring immediately after each use of $\mathcal{U}^{-1}\circ \mathcal{E}$, we observe that repeatedly applying $\mathcal{U}^{-1}\circ \mathcal{E}$ can enlarge the promised gap \emph{linearly} when the number of repetitions is small. 
Specifically, as given by \cref{lemma-222209}, the $n$-fold composition of $\mathcal{U}^{-1}\circ \mathcal{E}$, satisfies
\[
\Abs*{\rbra*{\mathcal{U}^{-1}\circ \mathcal{E}}^n-\mathcal{I}}_\diamond > \frac{1}{2}n\Abs*{\rbra*{\mathcal{U}^{-1}\circ \mathcal{E}}-\mathcal{I}}_\diamond,
\]
as long as $n \leq \rbra{2\Abs{\rbra{\mathcal{U}^{-1}\circ \mathcal{E}}-\mathcal{I}}_\diamond}^{-1}$.
Based on this, we can amplify the difference between $\mathcal{U}^{-1}\circ\mathcal{E}$ and $\mathcal{I}$, say $\eta$, to $\Theta(1)$ using only $O(1/\eta)$ queries, and then we can call the incoherent certification algorithm with constant precision.
However, since we do not know the value of $\eta$, we use a step-wise amplification strategy from low precision to high precision with a logarithmic number of total steps. 
Through a careful error analysis, we can conclude that the overall complexity is $O(d/\varepsilon)$.

\paragraph{Source-code access.}
Our approach for source-code access is also obtained by adapting the incoherent approach. 
Recall that in the incoherent approach, a random variable $X$ obtained using one query to $\mathcal{E}$ satisfies (i) $\E\sbra{X} = 1$ when $\mathcal{E} = \mathcal{U}$ while (ii) $\E\sbra{X} \leq 1 - \Theta\rbra{{\varepsilon^2}/{d}}$ when $\Abs{\mathcal{E}-\mathcal{U}}_\diamond \geq \varepsilon$. 
Given source-code access, we view this process as a quantum circuit $V$ from the perspective of amplitudes, which is of the form
\[
V \ket{0} = \sqrt{\E\sbra{X}} \ket{\phi_{0}} + \sqrt{1-\E\sbra{X}} \ket{\phi_{1}},
\]
with $\ket{\phi_0} \perp \ket{\phi_1}$. 
To distinguish the two cases, we focus on the amplitude $\sqrt{1-\E\sbra{X}}$ of $\ket{\phi_1}$, which is $0$ if $\mathcal{E} = \mathcal{U}$ and $\Theta\rbra{\varepsilon/\sqrt{d}}$ if $\Abs{\mathcal{E}-\mathcal{U}}_\diamond \geq \varepsilon$. 
This special case can be determined with query complexity $O\rbra{\sqrt{d}/\varepsilon}$ by quantum amplitude estimation \cite{BHMT02} if given source-code access. 

\subsection{Discussion}

In this paper, we settle the query complexities for quantum channel certification to unitary under three access models with increasing power:
$\Theta(d/\varepsilon^2)$ for incoherent access, $\Theta(d/\varepsilon)$ for coherent access, and $\Theta(\sqrt{d}/\varepsilon)$ for source-code access.
This establishes a strict complexity hierarchy across these access models.
We conclude by listing several directions for future work.
\begin{itemize}
    \item 
    First, an immediate question is whether one can establish a strict hierarchy for other quantum learning and testing problems across these access models.
    \item 
    Second, it would be interesting to identify problems for which the hierarchy collapses; i.e., 
    where a stronger access model does not help in improving the complexity.
    \item
    Third, a broader direction is to investigate whether a larger strict hierarchy can be obtained by considering a richer set of access models in quantum computing.
\end{itemize}

\subsection{Organization}

\cref{sec:incoh} presents the incoherent approach with the analytic connection between entanglement fidelity and the
diamond norm. 
\cref{sec:coh} presents the coherent bootstrapping approach.
\cref{sec:code} presents the approach for source-code access. 

\section{Incoherent access} \label{sec:incoh}
\subsection{The algorithm}
\begin{algorithm}[!htp]
\caption{Incoherent Certification: \texttt{IncohCert}$(\varepsilon,\delta,\mathcal{E},\mathcal{U})$}\label{alg-12231608}
    \begin{algorithmic}[1]
    \Require Precision $\varepsilon$, fail probability $\delta$, queries to the unknown channel $\mathcal{E}$ and classical description of the unitary channel $\mathcal{U}$.
    \Ensure Output either $\mathsf{accept}$ ($\mathcal{E}=\mathcal{U}$) or $\mathsf{reject}$ ($\|\mathcal{E}-\mathcal{U}\|_\diamond\geq \varepsilon$).
    \State $n \gets \ceil{8 d\ln\rbra{1/\delta}/\varepsilon^2}$.
    \For {$i = 1 \textup{ to } n$} 
    \State Perform $(\mathcal{U}^{-1}\circ \mathcal{E}) \otimes \mathcal{I}$ on the maximally entangled state $\ket{\Phi}$. \label{line1}
    \State Perform the POVM $\{M_0=\ketbra{\Phi}{\Phi}, M_1=I-\ketbra{\Phi}{\Phi}\}$ and let $x_i\in\{0,1\}$ be the outcome. \label{line2}
    \EndFor

    \If {all $x_{i}$ are $0$}
        \State \Return $\mathsf{accept}$. 
    \Else
        \State \Return $\mathsf{reject}$. 
    \EndIf
    \end{algorithmic}
\end{algorithm}

\begin{theorem}\label{thm-222140}
Let $\mathcal{E}$ be an unknown $d$-dimensional quantum channel and $\mathcal{U}$ be a known $d$-dimensional unitary channel. Then, \cref{alg-12231608} uses $n=\ceil{8d\ln(1/\delta)/\varepsilon^2}$ queries to $\mathcal{E}$ and distinguishes the cases: (i) $\mathcal{E}=\mathcal{U}$ or (ii) $\|\mathcal{E}-\mathcal{U}\|_\diamond \geq \varepsilon$, with probability at least $1-\delta$.
Moreover, for the case $\mathcal{E}=\mathcal{U}$, the algorithm always outputs ``$\mathcal{E}=\mathcal{U}$'' with probability $1$.
\end{theorem}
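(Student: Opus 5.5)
The proof reduces to analyzing a single Bernoulli variable. In each round, write $\mathcal{N}=\mathcal{U}^{-1}\circ\mathcal{E}$ and $\rho=(\mathcal{N}\otimes\mathcal{I})(\ketbra{\Phi}{\Phi})$. The outcome $x_i=0$ occurs with probability
\[
p=\bra{\Phi}\rho\ket{\Phi}=\mathrm{F}_{\textup{ent}}(\mathcal{N},\mathcal{I})=\mathrm{F}_{\textup{ent}}(\mathcal{E},\mathcal{U}).
\]
The last equality holds because fidelity is invariant under applying the unitary $U^{\dagger}\otimes I$ to both arguments, and $(\mathcal{I}\otimes\mathcal{I})(\ketbra{\Phi}{\Phi})$ is pure. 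The $n$ rounds are independent, since the protocol uses a fresh $\ket{\Phi}$ each time and measures immediately. Hence the algorithm accepts with probability exactly $p^n$. The query count is clearly $n$.

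\emph{Completeness.} If $\mathcal{E}=\mathcal{U}$, then $\mathcal{N}=\mathcal{I}$ and $\rho=\ketbra{\Phi}{\Phi}$, so $p=1$. Every outcome is $0$, and the algorithm accepts with probability $1$. This also gives the one-sided error claim.

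\emph{Soundness.} Suppose $\|\mathcal{E}-\mathcal{U}\|_\diamond\ge\varepsilon$. The key ingredient is the relation between entanglement fidelity and diamond norm announced in the introduction (\cref{lemma-1181906}):
\[
p\le\sqrt{p}\le 1-\frac{1}{8d}\|\mathcal{E}-\mathcal{U}\|_\diamond^2\le 1-\frac{\varepsilon^2}{8d}.
\]
Then
\[
p^n\le\left(1-\frac{\varepsilon^2}{8d}\right)^n\le \exp\!\left(-\frac{n\varepsilon^2}{8d}\right)\le \exp(-\ln(1/\delta))=\delta,
\]
using $n\ge 8d\ln(1/\delta)/\varepsilon^2$. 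So the algorithm rejects with probability at least $1-\delta$.

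The main obstacle is the lemma itself, if it must be established here rather than taken as given. My plan for it would be as follows.

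First, relate the diamond distance between $\mathcal{N}$ and the identity to $\sqrt{p}$. A Stinespring dilation argument gives a purification $V$ of $\mathcal{N}$. One can choose the environment state $\ket{e}$ optimally so that $\bra{\Phi}(V\otimes I)\ket{\Phi}$ relates to $\tr(K)/d$, where $K=(I\otimes\bra{e})V$. For any input $\ket{\psi}$ on system plus ancilla, the trace distance between $\mathcal{N}\otimes\mathcal{I}(\psi)$ and $\psi$ is bounded via Fuchs--van de Graaf by $2\sqrt{1-|\bra{\psi}(K\otimes I)\ket{\psi}|^2}$.

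Second, handle the loss from the maximally mixed input to a worst-case input. This is where the factor $d$ enters: the quantity $1-\Real\,\tr(K)/d$ controls $\sum_j(1-\Real\,\lambda_j)$ only up to a factor $d$ for the diagonal entries of $K$ in a suitable basis. One bounds $\|K-I\|$-type deviations by $d(1-\sqrt{p})$, which gives $\|\mathcal{N}-\mathcal{I}\|_\diamond^2\le 8d(1-\sqrt{p})$.

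I would expect the constant bookkeeping in this step to be the delicate part. For the present theorem, however, simply invoking \cref{lemma-1181906} suffices.
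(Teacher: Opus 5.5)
Your proof is correct and matches the paper's argument: each round outputs $0$ with probability $\mathrm{F}_{\textup{ent}}(\mathcal{E},\mathcal{U})$, which is $1$ when $\mathcal{E}=\mathcal{U}$ and at most $1-\varepsilon^2/(8d)$ otherwise by \cref{lemma-1181906}, and then $(1-\varepsilon^2/(8d))^n\le e^{-n\varepsilon^2/(8d)}\le\delta$. Your optional sketch of the lemma (Fuchs--van de Graaf, plus bounding $1-\Real\bra{\psi}(K\otimes I)\ket{\psi}$ by $\tr\bigl(I-(K+K^\dagger)/2\bigr)=d(1-\sqrt{\mathrm{F}_{\textup{ent}}})$) is a different route from the paper's Stinespring-continuity argument, and it can be completed to the same constant, but it is not needed for this theorem.
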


\begin{proof}
The algorithm is shown in \cref{alg-12231608}.
If $\mathcal{E}=\mathcal{U}$, then all $x_i$ must be $0$, and \cref{alg-12231608} outputs ``$\mathcal{E}=\mathcal{U}$'' with probability $1$.

Otherwise, we assume $\|\mathcal{E}-\mathcal{U}\|_\diamond \geq \varepsilon$. 
By \cref{lemma-1181906}, this means 
\[\mathrm{F}_{\textup{ent}}(\mathcal{E},\mathcal{U})\leq 1-\frac{1}{8d}\|\mathcal{E}-\mathcal{U}\|_\diamond^2\leq 1-\frac{\varepsilon^2}{8d}.\] 
Thus, for each $i$, the probability of $x_i=0$ is
\[\tr\!\left(\ketbra{\Phi}{\Phi}\rbra{(\mathcal{U}^{-1}\circ\mathcal{E})\otimes \mathcal{I}}(\ketbra{\Phi}{\Phi})\right)= \frac{1}{d^2}\tr(C_{\mathcal{U}}\cdot C_{\mathcal{E}})=\mathrm{F}_{\mathrm{ent}}(\mathcal{E},\mathcal{U})\leq 1-\frac{\varepsilon^2}{8d}.\]
Therefore, for $n \geq 8 d\ln\rbra{1/\delta}/\varepsilon^2$, the probability that there is at least one $x_i = 1$ is at least
\[
1 - \rbra*{1 - \frac{\varepsilon^2}{8d}}^n \geq 1 - \exp\rbra*{-\frac{n\varepsilon^2}{8d}} = 1 - \delta,
\]
where we use the fact that $1 - x \leq \exp\rbra{-x}$ for $x \in \rbra{0, 1}$. 
\end{proof}

\subsection{Technical lemmas}

\begin{lemma}\label{lemma-1181906}
    Let $\mathcal{E}$ be a $d$-dimensional quantum channel and $\mathcal{U}$ be a $d$-dimensional unitary channel. 
    Then, 
    \[
    \sqrt{\mathrm{F}_{\textup{ent}}\rbra{\mathcal{E}, \mathcal{U}}} \leq 1 - \frac{1}{8d} \Abs{\mathcal{E} - \mathcal{U}}_\diamond^2.
    \]
\end{lemma}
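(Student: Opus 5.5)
Without loss of generality I will replace $\mathcal{E}$ by $\mathcal{N}\coloneqq\mathcal{U}^{-1}\circ\mathcal{E}$. Both sides are invariant under this replacement: the entanglement fidelity is unitarily invariant, and the diamond norm satisfies $\Abs{\mathcal{E}-\mathcal{U}}_\diamond=\Abs{\mathcal{N}-\mathcal{I}}_\diamond$. So it suffices to prove $\sqrt{F}\le 1-\frac{1}{8d}\Abs{\mathcal{N}-\mathcal{I}}_\diamond^2$, where $F=\bra{\Phi}(\mathcal{N}\otimes\mathcal{I})(\ketbra{\Phi}{\Phi})\ket{\Phi}$.

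\textbf{Step 1: Kraus form of $F$.} Write $\mathcal{N}(\rho)=\sum_k A_k\rho A_k^\dagger$. Then
\[
F=\frac{1}{d^2}\sum_k\abs{\tr A_k}^2 .
\]
Using the unitary freedom of Kraus representations, I rotate the Kraus operators so that only $A_0$ has nonzero trace, with $\tr A_0=t\ge 0$. This gives $\sqrt F=t/d$.

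\textbf{Step 2: bound the diamond norm via a Stinespring dilation.} Dilate both channels with the isometries
\[
V=\sum_k A_k\otimes\ket{k}, \qquad W=I\otimes\ket{0}.
\]
The standard bound, obtained by tracing out the environment and using the triangle inequality on pure-state differences, gives
\[
\Abs{\mathcal{N}-\mathcal{I}}_\diamond\le 2\Abs{V-W}_{\mathrm{op}}.
\]
One can also use the sharper $\Abs{\mathcal{N}-\mathcal{I}}_\diamond\le 2\sqrt{1-\min_\rho F(\cdot)}$, but the operator-norm bound suffices. Now
\[
(V-W)^\dagger(V-W)=2I-A_0-A_0^\dagger,
\]
so
\[
\Abs{V-W}_{\mathrm{op}}^2=\lambda_{\max}\bigl(2I-2H\bigr),\qquad H=\tfrac12(A_0+A_0^\dagger).
\]

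\textbf{Step 3 (main obstacle): relate $\lambda_{\max}$ to the trace.} Here $\lambda_{\max}(2I-2H)=2(1-\lambda_{\min}(H))$, but we only control $\tr H=t$, the average eigenvalue. The point is that $H\preceq I$, because $A_0^\dagger A_0\preceq I$ implies $\Abs{A_0}_{\mathrm{op}}\le 1$. Hence every eigenvalue of $I-H$ is nonnegative, so
\[
1-\lambda_{\min}(H)\le\sum_i\bigl(1-\lambda_i(H)\bigr)=d-t .
\]
This costs the factor $d$. Therefore
\[
\Abs{\mathcal{N}-\mathcal{I}}_\diamond^2\le 4\cdot 2(d-t)=8d\Bigl(1-\frac{t}{d}\Bigr)=8d\bigl(1-\sqrt F\bigr),
\]
which rearranges to exactly the claimed inequality.

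One detail remains: the freedom in the dilation phase. Multiplying $A_0$ by a phase does not change $\mathcal{N}$, so I choose that phase so that $\tr A_0$ is real and nonnegative; this is what makes $\tr H=t$. Everything else in Step 3 is a one-line calculation, so the substantive points are the dilation bound in Step 2 and the positivity of $I-H$.
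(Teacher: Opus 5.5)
Your proof is correct and is essentially the paper's argument. Rotating the Kraus operators so that only $A_0$ has nonzero trace $t=d\sqrt{F}$ is equivalent to the paper's choice of environment vector $\ket{\eta_\star}$ saturating Cauchy--Schwarz, your dilation bound $\Abs{\mathcal{N}-\mathcal{I}}_\diamond\le 2\Abs{V-W}$ is the same, and your step $\lambda_{\max}(2I-2H)\le\tr(2I-2H)$ is exactly the paper's $\Abs{V-W}\le\sqrt{\tr((V-W)^\dagger(V-W))}$; the positivity of $I-H$ that you derive from $\Abs{A_0}\le 1$ is in fact automatic, since $2(I-H)=(V-W)^\dagger(V-W)\succeq 0$.
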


\begin{proof}

Let $ \mathcal{F} \coloneqq \mathcal{U}^{-1}\circ \mathcal{E} $. Then, by unitary invariance, we have $ \|\mathcal{E}-\mathcal{U}\|_\diamond = \|\mathcal{F}-\mathcal{I}\|_\diamond $ and $ \mathrm{F}_{\textup{ent}}(\mathcal{E},\mathcal{U}) = \mathrm{F}_{\textup{ent}}(\mathcal{F},\mathcal{I}) $.
Therefore, we only have to show that
\[
\sqrt{\mathrm{F}_{\textup{ent}}\rbra{\mathcal{F}, \mathcal{I}}} \leq 1 - \frac{1}{8d} \Abs{\mathcal{F} - \mathcal{I}}_\diamond^2.
\]
Let $\{A_k\}$ be a set of Kraus operators of $\mathcal{F}$ (no more than $d^2$). 
Let $ V=\sum_k A_k \otimes \ket{k}: \mathbb{C}^d\to \mathbb{C}^d \otimes \mathcal{H}_\mathrm{anc}\cong \mathbb{C}^d\otimes\mathbb{C}^{d^2}$ be a Stinespring dilation isometry for $ \mathcal{F} $, i.e.,
$\mathcal{F}(\rho)=\tr_{\mathrm{anc}}(V\rho V^\dagger)$, where $\cbra{\ket{k}}$ is the computational basis of $\mathcal{H}_\mathrm{anc}$.
Let $W_\eta \colon \ket{\psi} \mapsto \ket{\psi} \otimes \ket{\eta}$ be a Stinespring dilation isometry for $ \mathcal{I} $
for some unit vector $ \ket{\eta}\in\mathcal{H}_\mathrm{anc}$.
By the continuity of Stinespring's representation \cite{KSW08},
\begin{equation} \label{eq:diamond-leq-op}
    \|\mathcal{F}-\mathcal{I}\|_\diamond \le 2 \inf_{\ket{\eta}}\|V-W_\eta\|,
\end{equation}
where $\ket{\eta}$ is taken over all unit vectors in $\mathcal{H}_{\mathrm{anc}}$. 

On the other hand,
\begin{align}
\mathrm{F}_{\textup{ent}}\rbra{\mathcal{F}, \mathcal{I}} &=\sum_{k} \bra{\Phi} (A_k\otimes I) \ketbra{\Phi}{\Phi} (A_k^\dag \otimes I)\ket{\Phi} \nonumber\\
&= \frac{1}{d^2}\sum_k \abs*{\tr\rbra{A_k}}^2, \nonumber 
\end{align}
where $\ket{\Phi}$ denotes the maximally entangled state.
Suppose that $\ket{\eta} = \sum_k c_k \ket{k}$ with $\sum_k \abs{c_k}^2 = 1$. 
By the Cauchy--Schwarz inequality, 
\[
\left|\tr\rbra{W_\eta^\dag V}\right| = \left|\sum_k c_k^* \tr\rbra{A_k} \right|\leq \sqrt{\sum_k \abs*{\tr\rbra{A_k}}^2} = d \sqrt{\mathrm{F}_{\textup{ent}}\rbra{\mathcal{F}, \mathcal{I}}},
\]
which also means that there is a unit vector $\ket{\eta_{\star}}$ such that
\begin{equation} \label{eq:tr-eq-d-sqrt}
    \tr\rbra*{W_{\eta_{\star}}^\dag V} = d \sqrt{\mathrm{F}_{\textup{ent}}\rbra{\mathcal{F}, \mathcal{I}}}.
\end{equation}
Using $\Abs{A} \leq \sqrt{\tr\rbra{A^\dag A}}$, we have
\begin{align}
    \Abs{V - W_{\eta_{\star}}}
    & \leq \sqrt{\tr\rbra*{\rbra*{V - W_{\eta_{\star}}}^\dag \rbra*{V - W_{\eta_{\star}}}}} \nonumber \\
    & = \sqrt{\tr\rbra*{V^\dagger V}+\tr\rbra*{W_{\eta_\star}^\dagger W_{\eta_\star}} - 2\Real\rbra*{\tr\rbra*{W_{\eta_\star}^\dag V}}} \nonumber \\
    & = \sqrt{2d - 2\tr\rbra*{W_{\eta_\star}^\dag V}} \label{eq:step1} \\
    & = \sqrt{2d - 2d \sqrt{\mathrm{F}_{\textup{ent}}\rbra{\mathcal{F}, \mathcal{I}}}}, \label{eq:step2}
\end{align}
where \cref{eq:step1} uses the fact that $\tr\rbra{V^\dag V} = \tr\rbra{W_{\eta_\star}^\dag W_{\eta_\star}} = d$ and $\tr\rbra{W_{\eta_{\star}}^\dag V}$ is a real number, and \cref{eq:step2} is due to \cref{eq:tr-eq-d-sqrt}. 
Combining \cref{eq:diamond-leq-op,eq:step2}, we have
\[
\Abs{\mathcal{F} - \mathcal{I}}_\diamond \leq 2 \sqrt{2d - 2d \sqrt{\mathrm{F}_{\textup{ent}}\rbra{\mathcal{F}, \mathcal{I}}}},
\]
which gives 
\[
\sqrt{\mathrm{F}_{\textup{ent}}\rbra{\mathcal{F}, \mathcal{I}}} \leq 1 - \frac{1}{8d} \Abs{\mathcal{F} - \mathcal{I}}_\diamond^2.
\]
\end{proof}

\subsection{Lower bound}

The upper bound in \cref{thm-222140} matches the lower bound for quantum channel certification to identity in \cite{FFGO23}. 
Here, note that quantum channel certification to identity is a special case of quantum channel certification to unitary. 

\begin{lemma} [{\cite[Theorem 1]{FFGO23}}]
    Let $\mathcal{E}$ be an unknown $d$-dimensional unitary quantum channel and $\varepsilon \in \rbra{0, 1}$. 
    Then, any incoherent algorithm requires $\Omega\rbra{d/\varepsilon^2}$ to $\mathcal{E}$ to distinguish the cases: (i) $\mathcal{E} = \mathcal{I}$ or (ii) $\Abs{\mathcal{E}-\mathcal{I}}_\diamond \geq \varepsilon$, with probability $\geq 2/3$. 
\end{lemma}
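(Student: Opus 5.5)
This lemma is cited from \cite[Theorem 1]{FFGO23}, so in the paper it can simply be invoked. To reprove it, I would use a standard Le Cam / Fano-type two-point argument against a randomized alternative. The plan is to show that no incoherent algorithm with $T = o(d/\varepsilon^2)$ queries can distinguish $\mathcal{E}=\mathcal{I}$ from a random unitary channel $\mathcal{E}_V$ that is always $\varepsilon$-far from $\mathcal{I}$ in diamond norm.

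\textbf{Hard instance.} Take $U_V = V\,\mathrm{diag}(e^{i\theta},\dots,e^{i\theta},e^{-i\theta},\dots,e^{-i\theta})V^\dagger$ with $V$ Haar random and half the phases of each sign, and $\theta=\Theta(\varepsilon)$. The eigenvalues $e^{\pm i\theta}$ make the numerical range of $U_V$ exclude the origin at distance $\cos\theta$. The standard formula $\Abs{\mathcal{U}-\mathcal{I}}_\diamond = 2\sqrt{1-\min_{\rho}|\tr(\rho U)|^2}$ then gives $2\sin\theta\ge\varepsilon$ for a suitable constant choice. Since $\mathcal{I}$ is unitary, this is a valid instance of the problem.

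\textbf{Bounding the information.} Model an incoherent algorithm as an adaptive tree. At step $t$ it chooses a state $\rho_t$ on system plus ancilla and a POVM, based on past outcomes. I would bound the total variation, or better the $\chi^2$ divergence, between the outcome distribution under $\mathcal{I}$ and its average over $V$ under $\mathcal{U}_V$. Following the usual strategy for incoherent lower bounds (e.g.\ \cite{chen2022exponential}), I would condition on the transcript. The quantity to control is then the per-step likelihood ratio $\E_V[p_V(x_t\mid\text{past})]/p_{\mathcal{I}}(x_t\mid\text{past})$. Write $U_V = \cos\theta\, I + i\sin\theta\, H_V$, where $H_V$ is a random traceless reflection. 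After Haar averaging, the first-order term in $\theta$ vanishes. The second-order contributions are weighted by quantities like $\E_V\,|\langle\psi|(H_V\otimes I)|\phi\rangle|^2 = O(1/d)$ for the relevant off-diagonal directions. This should give per-query divergence $O(\theta^2/d)=O(\varepsilon^2/d)$, up to a constant.

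\textbf{Main obstacle.} The hardest step is handling adaptivity: later queries depend on earlier outcomes, and conditioning on them correlates with $V$. This prevents a naive product bound. I would first try the ``one-sided'' likelihood ratio technique of \cite{chen2022exponential}: lower-bound $\E_V\prod_t \frac{p_V(x_t|\cdot)}{p_{\mathcal{I}}(x_t|\cdot)}$ pointwise along the tree via Jensen and Weingarten moment bounds. Failing that, I would apply the chain rule for KL divergence with a mixture, bounding each conditional term by $O(\varepsilon^2/d)$ uniformly over the history using concentration of Haar measure. Summing over $T$ steps then gives total divergence $O(T\varepsilon^2/d)$. This quantity must be $\Omega(1)$ for success probability $2/3$, which yields $T=\Omega(d/\varepsilon^2)$.
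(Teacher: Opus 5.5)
The paper does not prove this lemma; it only cites \cite[Theorem 1]{FFGO23}, as you note. Your reproof sketch has a genuine gap at its first step: the ensemble you propose is not hard.

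With eigenvalues $e^{\pm i\theta}$ in equal numbers, $\tr(U_V)=d\cos\theta$ for every $V$. Hence
\[
\mathrm{F}_{\textup{ent}}(\mathcal{U}_V,\calI)=\abs{\tr U_V}^2/d^2=\cos^2\theta=1-\varepsilon^2/4
\]
when $2\sin\theta=\varepsilon$. So one round of the Choi test in \cref{alg-12231608} rejects with probability $\varepsilon^2/4$, independently of $d$ and $V$, and it never rejects under $\calI$. Even the ancilla-free, nonadaptive test works: input a Haar-random $\ket{\psi}$ and measure $\cbra{\ketbra{\psi}{\psi},I-\ketbra{\psi}{\psi}}$; this rejects with probability $\frac{d}{d+1}\sin^2\theta$. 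So $O(1/\varepsilon^2)$ incoherent queries separate your two cases, and no divergence bound can yield $\Omega(d/\varepsilon^2)$ for this ensemble.

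The step that fails is the heuristic behind your $O(\theta^2/d)$ per-query claim. For any pure input with system marginal $\rho_{\mathsf{A}}$, the probability of leaving the input state is exactly $\sin^2\theta\,(1-h^2)$, where $h=\tr(\rho_{\mathsf{A}}H_V)$. Here $h=0$ for the maximally entangled input and $h$ concentrates near $0$ in general. Each individual off-diagonal overlap is small, but the two-outcome POVM (the input state versus its complement) collects all of them at once.

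To gain the factor $d$, the instance must saturate \cref{lemma-1181906} up to constants: $1-\mathrm{F}_{\textup{ent}}=\Theta(\varepsilon^2/d)$ while the diamond distance stays $\varepsilon$. This means the deviation of $U$ from a multiple of $I$ must be concentrated on an $O(1)$-dimensional random subspace. An example is $U_\psi=I+(e^{i\theta}-1)\ketbra{\psi}{\psi}$ with Haar-random $\ket{\psi}$. Then $\Abs{\mathcal{U}_\psi-\calI}_\diamond=2\sin(\theta/2)$. For a fixed input with marginal $\rho_{\mathsf{A}}$, the leaving probability is $2(1-\cos\theta)\,x(1-x)$ with $x=\bra{\psi}\rho_{\mathsf{A}}\ket{\psi}$ and $\E_\psi x=1/d$.

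Even with such an instance, the adaptive analysis is the substantive part, and your sketch only lists options without showing any of them works.
\begin{itemize}
\item The usual mixture $\chi^2(\E_V P_V\,\|\,P_{\calI})$ is infinite as soon as the algorithm uses an outcome of probability zero under $\calI$, which any good test does.
\item In the KL chain rule, the posterior on the hidden direction given a history is no longer Haar. So per-step bounds ``uniform over the history via Haar concentration'' do not follow.
\item In the one-sided likelihood-ratio route, the factors $\abs{\braket{m_t}{\phi_t}+(e^{i\theta}-1)\bra{m_t}(\ketbra{\psi}{\psi}\otimes I)\ket{\phi_t}}^2/\abs{\braket{m_t}{\phi_t}}^2$ are unbounded when $\braket{m_t}{\phi_t}$ is small. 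A pointwise lower bound on their Haar-averaged product therefore needs a real argument, not just Jensen plus Weingarten moments.
\end{itemize}
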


\section{Coherent access} \label{sec:coh}
\subsection{The algorithm}
\begin{algorithm}
\caption{Coherent certification: \texttt{CohCert}$(\varepsilon,\delta,\mathcal{E},\mathcal{U})$}\label{alg-222055}
\begin{algorithmic}[1]
\Require Precision $\varepsilon$, fail probability $\delta$, queries to the unknown channel $\mathcal{E}$ and classical description of the unitary channel $\mathcal{U}$.
\Ensure Output either $\mathsf{accept}$ ($\mathcal{E}=\mathcal{U}$) or $\mathsf{reject}$ ($\|\mathcal{E}-\mathcal{U}\|_\diamond\geq \varepsilon$).
\State $T\gets \ceil{\log(1/\varepsilon)+1}$.
\For {$j=0$ to $T$}
\State $p_j\gets 2^j$.
\State $\delta_j\gets \delta \cdot 2^{j-T-1}$.
\State $b_j\gets\mathtt{IncohCert}(1/8,\delta_j,(\mathcal{U}^{-1}\circ\mathcal{E})^{p_j},\mathcal{I})$. \Comment{Call \cref{alg-12231608}}
\If {$b_j=\mathsf{reject}$} 
\State \Return $\mathsf{reject}$. 
\EndIf
\EndFor
\State \Return $\mathsf{accept}$. 
\end{algorithmic}
\end{algorithm}

\begin{theorem}\label{thm-230318}
Let $\mathcal{E}$ be an unknown $d$-dimensional quantum channel and $\mathcal{U}$ be a known $d$-dimensional unitary channel. Then, \cref{alg-222055} uses $n=O(d\log(1/\delta)/\varepsilon)$ queries to $\mathcal{E}$ and distinguishes the cases: (i) $\mathcal{E}=\mathcal{U}$ or (ii) $\|\mathcal{E}-\mathcal{U}\|_\diamond \geq \varepsilon$, with probability at least $1-\delta$.
\end{theorem}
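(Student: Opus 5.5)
We need three things: the query count, perfect completeness, and soundness. Throughout, set $\mathcal{F}\coloneqq\mathcal{U}^{-1}\circ\mathcal{E}$ and $\eta\coloneqq\|\mathcal{F}-\mathcal{I}\|_\diamond=\|\mathcal{E}-\mathcal{U}\|_\diamond$, using unitary invariance of the diamond norm.

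\textbf{Query count.} Each call to \texttt{IncohCert}$(1/8,\delta_j,\mathcal{F}^{p_j},\mathcal{I})$ uses $\lceil 512\, d\ln(1/\delta_j)\rceil$ queries to $\mathcal{F}^{p_j}$ by \cref{thm-222140}. Each such query costs $p_j=2^j$ queries to $\mathcal{E}$, since $\mathcal{U}^{-1}$ is applied for free. The total is therefore
\[
\sum_{j=0}^{T} 2^j\, O\bigl(d(\ln(1/\delta)+(T+1-j)\ln 2)\bigr).
\]
The geometric weight $2^j$ makes this dominated by $j$ near $T$, where $T+1-j=O(1)$. Since $\sum_j 2^j (T+1-j)=O(2^T)$, the total is $O(2^T d\log(1/\delta))=O(d\log(1/\delta)/\varepsilon)$.

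\textbf{Completeness.} If $\mathcal{E}=\mathcal{U}$, then every $\mathcal{F}^{p_j}=\mathcal{I}$. Each call accepts with probability $1$ by the one-sided guarantee of \cref{thm-222140}, so the algorithm accepts.

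\textbf{Soundness.} Suppose $\eta\ge\varepsilon$. It suffices to find a single index $j^\star\le T$ with $\|\mathcal{F}^{p_{j^\star}}-\mathcal{I}\|_\diamond\ge 1/8$; that call then rejects with probability at least $1-\delta_{j^\star}\ge 1-\delta$, and the algorithm outputs reject.

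\begin{itemize}
\item If $\eta\ge 1/8$, take $j^\star=0$.
\item Otherwise choose the largest $j$ with $2^j\le 1/(2\eta)$. Since $\eta<1/8$ we have $1/(2\eta)>4$, so $j\ge 2$ exists. Since $\eta\ge\varepsilon$, we get $2^j\le 1/(2\varepsilon)$, so $j\le T$.
\item Apply \cref{lemma-222209} with $n=2^j\le(2\eta)^{-1}$ to get $\|\mathcal{F}^{2^j}-\mathcal{I}\|_\diamond>\tfrac12\, 2^j\eta$.
\item Maximality of $j$ gives $2^{j+1}>1/(2\eta)$, so $2^j\eta>1/4$ and hence $\|\mathcal{F}^{2^j}-\mathcal{I}\|_\diamond>1/8$.
\end{itemize}

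The main obstacle is \cref{lemma-222209} itself, which I am taking as given from later in the paper. If I had to prove it, I would do so from the telescoping identity
\[
\mathcal{F}^n-\mathcal{I}=\sum_{k=0}^{n-1}\mathcal{F}^k(\mathcal{F}-\mathcal{I}),
\]
writing $\mathcal{F}^k(\mathcal{F}-\mathcal{I})=(\mathcal{F}-\mathcal{I})+(\mathcal{F}^k-\mathcal{I})(\mathcal{F}-\mathcal{I})$. This gives the lower bound
\[
\|\mathcal{F}^n-\mathcal{I}\|_\diamond\ge n\eta-\sum_{k<n}\|\mathcal{F}^k-\mathcal{I}\|_\diamond\,\eta\ge n\eta-\tfrac{n^2}{2}\eta^2,
\]
where I use $\|\mathcal{F}^k-\mathcal{I}\|_\diamond\le k\eta$ and submultiplicativity of the diamond norm for differences of channels. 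The right-hand side is at least $\tfrac34 n\eta$ when $n\eta\le 1/2$.
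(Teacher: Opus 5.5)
Your proof is correct and essentially the paper's: the query count, the one-sided completeness, and the soundness index all match, since the largest $j$ with $2^j\le 1/(2\eta)$ is exactly the paper's $j=a-2$ for $\eta\in(2^{-a},2^{-a+1}]$, and your direct $\eta\ge 1/8\Rightarrow j^\star=0$ case only simplifies the paper's case split. Your optional telescoping sketch of \cref{lemma-222209} is also valid; it differs from the paper's binomial-expansion proof by using contractivity of the channels $\mathcal{F}^k$, which gives the slightly better constant $3/4$ instead of $1-e^{1/2}/4$.
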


\begin{proof}
The algorithm is shown in \cref{alg-222055}.

\textbf{The case $\bm{\mathcal{E}=\mathcal{U}}$.}
In the $j$-th iteration in \cref{alg-222055}, the probability of $b_j=\mathsf{reject}$ is $0$ due to \cref{thm-222140}. 
Therefore, the probability of \cref{alg-222055} outputting $\mathsf{reject}$ is $0$.

\textbf{The case $\bm{\|\mathcal{E}-\mathcal{U}\|_\diamond\geq \varepsilon}$.}
Let $\eta\coloneqq \|\mathcal{E}-\mathcal{U}\|_\diamond=\|\mathcal{U}^{-1}\circ\mathcal{E}-\mathcal{I}\|_\diamond$.

If $\eta\in(1/2, 2]$, then in the $0$-th iteration, $b_0=\mathsf{accept}$ with probability at most $\delta_0\leq \delta$, due to \cref{thm-222140}. This means \cref{alg-222055} outputs $\mathsf{accept}$ with probability at most $\delta$.

Otherwise, assume $\eta \in (2^{-a},2^{-a+1}]$ for an integer $a\geq 2$, and we know that $a\leq \log(1/\varepsilon)+1$.
Note that $T\geq a$ and we consider the $(a-2)$-th iteration.
We can see that $p_{a-2}=2^{a-2}\leq \frac{1}{2\eta}$. Using \cref{lemma-222209}, we have
\[\left\|(\mathcal{U}^{-1}\circ\mathcal{E})^{p_{a-2}}-\mathcal{I}\right\|_\diamond> \frac{1}{2}p_{a-2}\eta=2^{a-3}\eta> \frac{1}{8}.\]
Then, by \cref{thm-222140}, the probability that $b_{a-2}=\mathsf{accept}$ is at most $\delta_{a-2}\leq \delta$.
This means \cref{alg-222055} outputs $\mathsf{accept}$ with probability at most $\delta$.

\textbf{Complexity analysis.} The complexity of \cref{alg-222055} is thus
\begin{align*}
    \sum_{j=0}^T p_j \cdot O\rbra{d \log\rbra{1/\delta_j}}
    & = \sum_{j=0}^T 2^j \cdot O\rbra{d \rbra*{\log\rbra{1/\delta}+T+1-j}} \\
    & = O\rbra{2^T d \log\rbra{1/\delta}} + O\rbra{d\cdot 2^{T+2}} \\
    & = O\rbra{d\log\rbra{1/\delta}/\varepsilon}.
\end{align*}

\end{proof}

\subsection{Technical lemmas}
\begin{lemma}\label{lemma-222209}
    If $\Abs{\mathcal{E} -\mathcal{I}}_\diamond= \varepsilon$, then for any integer $1 \leq n\leq \frac{1}{2\varepsilon}$, we have 
    \[\|\mathcal{E}^n-\mathcal{I}\|_\diamond > \frac{1}{2}n\varepsilon. \]
\end{lemma}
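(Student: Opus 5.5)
The plan is a perturbative expansion around the identity, with the higher-order terms controlled by submultiplicativity of the diamond norm. Throughout I assume $\varepsilon>0$; if $\varepsilon=0$ the range $n\le 1/(2\varepsilon)$ is unbounded and the strict inequality $0>0$ fails trivially, so this case is excluded.

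\textbf{Expansion.} Write $\mathcal{E}=\mathcal{I}+\Delta$, where $\Delta\coloneqq\mathcal{E}-\mathcal{I}$ is a Hermiticity-preserving linear map with $\Abs{\Delta}_\diamond=\varepsilon$. Since $\mathcal{I}$ commutes with $\Delta$, the binomial theorem applies to compositions:
\[
\mathcal{E}^n-\mathcal{I}=n\Delta+\sum_{k=2}^{n}\binom{n}{k}\Delta^k .
\]

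\textbf{Bounding the tail.} The diamond norm is submultiplicative under composition for arbitrary linear maps, since it is the operator norm of $\Delta\otimes\mathcal{I}$ induced by the trace norm. Hence $\Abs{\Delta^k}_\diamond\le\varepsilon^k$. By the reverse triangle inequality,
\[
\Abs{\mathcal{E}^n-\mathcal{I}}_\diamond\ \ge\ n\varepsilon-\sum_{k=2}^n\binom{n}{k}\varepsilon^k\ =\ n\varepsilon-\bigl((1+\varepsilon)^n-1-n\varepsilon\bigr)\ =\ 1+2n\varepsilon-(1+\varepsilon)^n .
\]

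\textbf{Elementary estimate.} It remains to show $(1+\varepsilon)^n<1+\tfrac32 n\varepsilon$ whenever $x\coloneqq n\varepsilon\in(0,1/2]$. First, $(1+\varepsilon)^n\le e^{x}$. Next, the function $g(x)=e^x-1-\tfrac32 x$ is convex with $g(0)=0$ and $g(1/2)=e^{1/2}-1.75<0$. By convexity, $g(x)\le 2x\,g(1/2)<0$ on $(0,1/2]$. This gives
\[
\Abs{\mathcal{E}^n-\mathcal{I}}_\diamond>2n\varepsilon-\tfrac32 n\varepsilon=\tfrac12 n\varepsilon .
\]

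I see no real obstacle here. The only points needing care are that submultiplicativity of $\Abs{\cdot}_\diamond$ holds for non-channel maps like $\Delta$ (it does), and that the numerical constant works out at the endpoint $n\varepsilon=1/2$, which the convexity check handles.
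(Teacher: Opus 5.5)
Your proof is correct and follows the paper's argument: the same binomial expansion, the same submultiplicativity bound $\Abs{\Delta^k}_\diamond\le\varepsilon^k$, and the same estimate $(1+\varepsilon)^n\le e^{n\varepsilon}$. The only difference is the last numerical step: you use convexity of $e^x-1-\tfrac32 x$ on $(0,1/2]$, whereas the paper uses the Taylor bound $e^{x}-1-x\le \tfrac{x^2}{2}e^{x}$ together with $x\le 1/2$ to get $x-\tfrac{e^{1/2}}{4}x>\tfrac12 x$.
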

\begin{proof}
Let $\mathcal{E}=\mathcal{I}+\Delta$, where $\|\Delta\|_\diamond = \varepsilon$. Then we have
\[\mathcal{E}^n-\mathcal{I}=(\mathcal{I}+\Delta)^n-\mathcal{I}=n\Delta + \sum_{i=2}^n\binom{n}{i}\Delta^i.\]
Therefore, 
\begin{align}
\|\mathcal{E}^n -\mathcal{I}\|_\diamond &\geq n\|\Delta\|_\diamond-\sum_{i=2}^n \binom{n}{i}\|\Delta^i\|_\diamond \nonumber\\
&\geq n\varepsilon -\sum_{i=2}^n \binom{n}{i} \varepsilon^i \label{eq-1182021}\\
&=n\varepsilon - \left((1+\varepsilon)^n-1 -n\varepsilon\right) \nonumber\\
&\geq n\varepsilon - \left(e^{n\varepsilon}-1-n\varepsilon\right)\nonumber \\ 
&\geq n\varepsilon - e^{n\varepsilon}\frac{(n\varepsilon)^2}{2}\label{eq-tanfu}\\
&\geq n\varepsilon - \frac{e^{1/2}}{4} n\varepsilon \label{eq-1182030} \\
&> \frac{1}{2}n\varepsilon \nonumber
\end{align}
where \cref{eq-1182021} uses the sub-multiplicativity $\|\Delta^k\|_\diamond\leq \|\Delta\|_\diamond^k$ (see \cite[Proposition 3.48(1)]{Wat18}), \cref{eq-tanfu} uses the Taylor expansion 
\[e^{n\varepsilon}-1-n\varepsilon=\sum_{i=2}^\infty \frac{(n\varepsilon)^i}{i!}\leq \frac{(n\varepsilon)^2}{2} \sum_{i=0}^\infty \frac{(n\varepsilon)^i}{i!}=\frac{(n\varepsilon)^2}{2} e^{n\varepsilon},\]
and \cref{eq-1182030} uses $n\varepsilon\leq 1/2$.
\end{proof}

\subsection{Lower bound}

In the following we present a query lower bound of $\Omega(d/\varepsilon)$ for quantum channel certification to identity, a special case of certification to unitary.
The proof follows from a simple reduction to Regev-Schiff's $p$-faulty Grover problem~\cite{RS08}.

\begin{lemma}
    \label{lmm:cert-i-lb}
    Let $\mathcal{E}$ be an unknown $d$-dimensional quantum channel. Suppose $\varepsilon\in (0,1)$. Then it requires $\Omega(d/\varepsilon)$ queries to $\mathcal{E}$ to distinguish the cases: (i) $\mathcal{E}=\mathcal{I}$ or (ii) $\|\mathcal{E}-\mathcal{I}\|_\diamond \geq \varepsilon$, with probability $\geq 2/3$.
\end{lemma}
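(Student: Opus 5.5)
We reduce from Regev and Schiff's $p$-faulty Grover search problem~\cite{RS08}. In that problem one is given a faulty oracle for an unknown marked element $x\in[d]$ (or no marked element). Each query applies the phase oracle $O_x\colon \ket{y}\mapsto(-1)^{[y=x]}\ket{y}$ with probability $1-p$, and with probability $p$ applies the identity. It is known that any algorithm, even with arbitrary coherent processing between queries, needs $\Omega(pd)$ queries to find (or even detect the existence of) the marked element with constant probability when $p$ is a constant. The bound should extend to $\Omega(d/(1-p))$ in the regime where the oracle \emph{works} only with small probability. So we set things up so that the oracle acts nontrivially with probability $\Theta(\varepsilon)$.

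Concretely, for $x\in[d]$ define the channel
\[
\mathcal{E}_x(\rho)=(1-\varepsilon')\rho+\varepsilon' O_x\rho O_x^\dagger ,
\]
with $\varepsilon'=c\,\varepsilon$ for a suitable constant $c$. The hard instance is: case (i) $\mathcal{E}=\mathcal{I}$, versus case (ii) $\mathcal{E}=\mathcal{E}_x$ for uniformly random $x$.

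\textbf{Step 1: the far case really is $\varepsilon$-far.} We have $\mathcal{E}_x-\mathcal{I}=\varepsilon'(\mathcal{O}_x-\mathcal{I})$, where $\mathcal{O}_x$ is the unitary channel of $O_x$. The phase $O_x$ has eigenvalues $\pm1$, so $\|\mathcal{O}_x-\mathcal{I}\|_\diamond=2$; this is attained with input $(\ket{x}+\ket{y})/\sqrt2$ for $y\neq x$, whose image is orthogonal to it. Hence $\|\mathcal{E}_x-\mathcal{I}\|_\diamond=2\varepsilon'$, and $c=1/2$ gives exactly $\varepsilon$. Note $\varepsilon'\le 1/2$ since $\varepsilon<1$.

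\textbf{Step 2: the reduction.} Any certification algorithm using $q$ coherent queries to $\mathcal{E}$ is an algorithm with $q$ queries to the faulty oracle with fault probability $p=1-\varepsilon'$. Moreover, an identity oracle corresponds to the ``no marked element'' instance. So a certifier with success probability $2/3$ distinguishes the no-marked-item case from a random-marked-item case.

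\textbf{Step 3: the lower bound (main obstacle).} It remains to show that this detection requires $\Omega(d/\varepsilon')$ queries. The Regev--Schiff proof tracks the distance between the algorithm's states under $\mathcal{E}_x$ and under $\mathcal{I}$ (purified with the fault register). It shows that the averaged squared distance grows by only $O(\varepsilon'/d)$ per query, rather than the $O(1/\sqrt{d})$ per-query amplitude growth of Grover search. The reason is that the fault coin decoheres the coherent accumulation of amplitude.

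I expect the main work to be verifying that their argument carries over with success probability $\varepsilon'$ small rather than $p$ constant; this is the sense in which the statement says the lower bound "follows from the proof and the hard instance" of~\cite{RS08}. I would redo their hybrid argument with $\mathcal{E}_x$ written as a mixture. Each query contributes expected divergence $O(\varepsilon'\cdot\|\Pi_x\psi_t\|^2)$, and summing over $x$ and $t$ gives total $O(q\varepsilon'/d)$. Constant distinguishability therefore forces $q=\Omega(d/\varepsilon')=\Omega(d/\varepsilon)$.
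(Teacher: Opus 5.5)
Your Steps 1 and 2 are the paper's argument. The instance is the same ($\varepsilon'=\varepsilon/2$ is the paper's $p=1-\varepsilon/2$), as are the witness $(\ket{x}+\ket{y})/\sqrt2$ and the reduction to detecting a faulty marked element. The gap is Step 3. The paper does not re-derive anything there. It directly invokes the Regev--Schiff bound $\Omega(pd/(1-p))$ for the $p$-faulty Grover problem, which already covers $p\to 1$ (here $p=1-\varepsilon/2>1/2$) and gives $\Omega(d/\varepsilon)$ at once.

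Because you treat that bound as known only for constant $p$, your Step 3 rests on the sentence ``each query contributes divergence $O(\varepsilon'\Abs{\Pi_x\psi_t}^2)$, and summing gives $O(q\varepsilon'/d)$.'' That sentence is the whole difficulty, and carried out the natural way it fails:
\begin{itemize}
\item One application of $\mathcal{E}_x$ does cost infidelity only $4\varepsilon' a_t(1-a_t)$, where $a_t=\Abs{\Pi_x\psi_t}^2$. But infidelities are not additive along a sequence of queries; only their square roots satisfy a triangle inequality. Summing $\sqrt{\varepsilon' a_t}$ yields just $\Omega(\sqrt{d/\varepsilon})$.
\item Writing $\mathcal{E}_x$ as a mixture over fault patterns $S$ and running the BBBV hybrid argument per pattern gives $\sum_x\E_S\Abs{\psi^S_T-\psi_T}^2\lesssim \varepsilon' T+\varepsilon'^2T^2$, hence only $\Omega(\sqrt d/\varepsilon)$. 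The $\varepsilon'^2T^2$ cross term is exactly the coherent accumulation you must rule out, and per-pattern bounds cannot rule it out.
\item With the fault-register purification, the per-query increment of the squared distance contains a cross term $\Real\bra{\psi_t}\Pi_x\ket{\delta_t}$ with the accumulated deviation. Bounding it step by step via Cauchy--Schwarz brings back the $\varepsilon'^2T^2$ term.
\end{itemize}

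If you want Step 3 to be self-contained, what is missing is an amortized bound.
\begin{itemize}
\item Purify $\mathcal{E}_x$ by $\ket{\chi}\mapsto\sqrt{1-\varepsilon'}\ket{\chi}\ket{0}+\sqrt{\varepsilon'}O_x\ket{\chi}\ket{1}$. Purify $\mathcal{I}$ by $\ket{\chi}\mapsto\ket{\chi}\ket{e}$ with $\ket{e}=\sqrt{1-\varepsilon'}\ket{0}+\sqrt{\varepsilon'}\ket{1}$; this choice of reference environment matters.
\item The overlap of the final purified states is then $\braket{\psi_T}{\phi_T}$. Here $\phi$ evolves like $\psi$ except that each query is replaced by the contraction $P=I-2\varepsilon'\Pi_x$. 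Hence $1-\braket{\psi_T}{\phi_T}=2\varepsilon'\sum_t\bra{\psi_t}\Pi_x\ket{\phi_t}$.
\item Set $\delta_t=\psi_t-\phi_t$ and use $\Pi_xP=(1-2\varepsilon')\Pi_x$. One gets $\Abs{\delta_{t+1}}^2=\Abs{P\delta_t}^2+4\varepsilon'^2a_t+4\varepsilon'(1-2\varepsilon')\Real\bra{\psi_t}\Pi_x\ket{\delta_t}$.
\item Since $\Abs{P}\le 1$, this telescopes to $\sum_t\Real\bra{\psi_t}\Pi_x\ket{\delta_t}\ge-\frac{\varepsilon'}{1-2\varepsilon'}\sum_t a_t$.
\item Therefore $\sum_x\rbra*{1-\Real\braket{\psi_T}{\phi^x_T}}\le 2\varepsilon'\frac{1-\varepsilon'}{1-2\varepsilon'}T$.
\item The usual averaging over $x$ gives $T=\Omega(d/\varepsilon')$: bound trace distance by purified distance, then apply Cauchy--Schwarz.
\end{itemize}
This is where $\varepsilon'<1/2$, i.e.\ $p>1/2$, is used.

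So either cite the general-$p$ Regev--Schiff bound as the paper does, or supply an argument of this kind. The sketch as written does not establish $\Omega(d/\varepsilon)$.
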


\begin{proof}
Let us consider the following hard instance from~\cite{RS08}.
For each $k\in\{0,1,\dots,d-1\}$, define the $p$-faulty Grover oracle and its corresponding quantum channel:
\begin{align*}
    O_k &= I - 2\ketbra{k}{k},\\
    \mathcal E_k(\rho)&= p\rho + (1-p)O_k \rho O_k^\dagger,
\end{align*}
where $p=1-\frac{\varepsilon}{2}> \frac{1}{2}$.

Consider the task of distinguishing between the following two cases:
\begin{enumerate}
    \item 
    $\calE =\calI$. 

    \item
    $\calE=\calE_k$ for some $k\in \braces*{0,1,\ldots, d-1}$.
    In this case, pick any $j\neq k$.
    Define
    \[
    \ket{\psi}=\frac{\ket{k}+\ket{j}}{\sqrt2},
    \qquad
    \ket{\psi_\perp}=\frac{-\ket{k}+\ket{j}}{\sqrt2}.
    \]
    It is easy to see that $O_k\ket{\psi}=\ket{\psi_\perp}$ and $\braket{\psi}{\psi_\perp}=0$, which mean
    \[
    \mathcal E_k(\ket{\psi}\!\bra{\psi})
    \;=\;
    p\,\ket{\psi}\!\bra{\psi}+(1-p)\,\ket{\psi_\perp}\!\bra{\psi_\perp}.
    \]
    Then, we can calculate that
    \[
    \norm*{\calE_k - \calI}_{\diamond}\geq 
    \norm*{\calE_k(\ket{\psi}\!\bra{\psi})-\ket{\psi}\!\bra{\psi}}_1
    = 2(1-p) = \varepsilon.
    \]
\end{enumerate}
Any quantum algorithm for quantum channel certification to identity (see \Cref{lmm:cert-i-lb}) can also distinguish the above two cases.
Since the $p$-faulty Grover problem has query lower bound $\Omega\!\left(\frac{pd}{1-p}\right)$~\cite{RS08},
it follows that $\Omega(d/\varepsilon)$ queries are required for quantum channel certification by $p=1-\frac{\varepsilon}{2}$.
\end{proof}

\section{Source-code access} \label{sec:code}

\subsection{The algorithm}

\begin{theorem} \label{thm-1526}
    Let $W$ be the source code (quantum unitary circuit) that implements an unknown $d$-dimensional quantum channel $\mathcal{E}$. 
    Let $\mathcal{U}$ be a known $d$-dimensional unitary channel. 
    Then, \cref{alg-1457} uses $n = O\rbra{\sqrt{d}\log\rbra{1/\delta}/\varepsilon}$ queries to controlled-$W$ and controlled-$W^\dag$, and distinguishes the cases: (i) $\mathcal{E}=\mathcal{U}$ or (ii) $\|\mathcal{E}-\mathcal{U}\|_\diamond \geq \varepsilon$, with probability at least $1-\delta$.
\end{theorem}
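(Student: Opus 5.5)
We build a unitary circuit $V$ from one use of $W$ that encodes the acceptance probability of \cref{alg-12231608} in an amplitude, and then test that amplitude with amplitude estimation. On registers $\mathsf{A}$ (the system), $\mathsf{R}$ (a $d$-dimensional reference) and $\mathsf{B}$ (the environment of $W$), $V$ acts as follows:
\begin{enumerate}
\item prepare $\ket{\Phi}_{\mathsf{AR}}\otimes\ket{0}_{\mathsf{B}}$ with a fixed unitary;
\item apply $W$ to $\mathsf{AB}$;
\item apply $U^\dagger$ to $\mathsf{A}$;
\item apply the inverse of the $\ket{\Phi}$-preparation unitary to $\mathsf{AR}$.
\end{enumerate}
Let $\Pi=\ketbra{0}{0}_{\mathsf{AR}}\otimes I_{\mathsf{B}}$. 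Tracing out $\mathsf{B}$ gives
\[
\Abs{\Pi V\ket{0}}^2=\tr\rbra*{\ketbra{\Phi}{\Phi}\rbra*{(\mathcal{U}^{-1}\circ\mathcal{E})\otimes\mathcal{I}}(\ketbra{\Phi}{\Phi})}=\mathrm{F}_{\textup{ent}}(\mathcal{E},\mathcal{U}).
\]
We can therefore write $V\ket{0}=\sqrt{p}\ket{\phi_0}+\sqrt{1-p}\ket{\phi_1}$ with $\Pi\ket{\phi_0}=\ket{\phi_0}$, $\Pi\ket{\phi_1}=0$ and $p=\mathrm{F}_{\textup{ent}}(\mathcal{E},\mathcal{U})$.

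If $\mathcal{E}=\mathcal{U}$, then $p=1$ and the \enquote{bad} amplitude $a\coloneqq\sqrt{1-p}$ equals $0$. If $\Abs{\mathcal{E}-\mathcal{U}}_\diamond\ge\varepsilon$, then \cref{lemma-1181906} gives $\sqrt{p}\le 1-\varepsilon^2/(8d)$. Hence $p\le 1-\varepsilon^2/(8d)$, so $a\ge \varepsilon/\sqrt{8d}$. Let $\theta$ be the angle with $\sin\theta=a$. The problem reduces to deciding $\theta=0$ versus $\theta\ge \varepsilon/\sqrt{8d}$.

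Amplitude estimation \cite{BHMT02} does this. Use the Grover iterate $G=-V(I-2\ketbra{0}{0})V^\dagger(I-2\Pi)$, which needs one query to $W$ and one to $W^\dagger$. Run phase estimation of $G$ on $V\ket{0}$ with $M=\Theta(\sqrt{d}/\varepsilon)$ applications of controlled-$G$; this is where controlled-$W$ and controlled-$W^\dagger$ are required.

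By \cite{BHMT02}, the estimate $\tilde a$ satisfies $\abs{\tilde a-a}\le 2\pi a/M+\pi^2/M^2$ with probability at least $8/\pi^2$, and it outputs exactly $\tilde a=0$ when $a=0$. Choose $M\ge c\sqrt{d}/\varepsilon$ with a suitable constant $c$. In the far case $\tilde a\ge a/2>0$ with constant probability, so the test is: output $\mathsf{accept}$ iff $\tilde a=0$.

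The error is one-sided. In the case $\mathcal{E}=\mathcal{U}$ we have $G\,V\ket{0}=V\ket{0}$ up to the fixed sign convention, so $V\ket{0}$ is an eigenvector of eigenphase $0$ and the procedure always outputs $\mathsf{accept}$. In the far case each run rejects with constant probability at least $8/\pi^2>1/2$.

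To reach failure probability $\delta$, repeat the procedure $O(\log(1/\delta))$ times independently and output $\mathsf{reject}$ if any run rejects. This gives total query complexity $O(\sqrt{d}\log(1/\delta)/\varepsilon)$ to controlled-$W$ and controlled-$W^\dagger$.

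The main obstacle is making the exactness claims precise. We need that the $\theta=0$ case yields the zero estimate with certainty (or at least with probability close to $1$). We also need that in the far case the estimate is nonzero with constant probability once $M\gtrsim 1/\theta$. I would verify this directly from the phase estimation distribution. When $\theta=0$, the start state lies in the $+1$ eigenspace of $G$ (the $e^{\pm 2i\theta}$ eigenvectors merge), so the phase register reads $0$ deterministically. When $\theta\ge \varepsilon/\sqrt{8d}$ and $M\theta\ge\pi$, both eigenphases $\pm2\theta$ are at distance at least $2\pi/M$ from $0$, and the standard bound on the probability of outcome $0$ gives at most a constant strictly less than $1$.
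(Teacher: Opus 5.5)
Your proof is correct and follows the paper's reduction. You use the same one-query circuit $V$ with $\Abs{\Pi V\ket{0}}^2=\mathrm{F}_{\textup{ent}}(\mathcal{E},\mathcal{U})$, the same appeal to \cref{lemma-1181906} to get $\sqrt{1-\mathrm{F}_{\textup{ent}}}\ge\varepsilon/\sqrt{8d}$ in the far case, and amplitude estimation at resolution $\Theta(\varepsilon/\sqrt d)$.

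The only real difference is the last step. The paper calls the additive-error estimator of \cref{thm:ae} as a black box, with precision $\varepsilon/(16\sqrt d)$ and failure probability $\delta$ built in, and thresholds at $\varepsilon/(4\sqrt d)$. That gives two-sided error but needs no analysis of the estimator's output distribution. It also tolerates channels merely close to $\mathcal{U}$.

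You instead run bare phase estimation of the Grover iterate, accept iff the register shows the eigenphase of $V\ket{0}$ in the case $\mathcal{E}=\mathcal{U}$, and OR together $O(\log(1/\delta))$ runs. This costs you the direct phase-estimation check, and your sketch of it is right: with $M\theta\ge\pi$ and $\sin\theta\ge 2\theta/\pi$ on $[0,\pi/2]$, outcome $0$ has probability at most $\pi^2/(4M^2\theta^2)\le 1/4$. In return you get perfect completeness, matching the one-sided error of \cref{thm-222140}, which the paper's source-code algorithm does not claim.

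Two details need fixing.
\begin{itemize}
\item With $G=-V(I-2\ketbra{0}{0})V^\dagger(I-2\Pi)$ exactly as you wrote it, $\Pi V\ket{0}=V\ket{0}$ gives $GV\ket{0}=-V\ket{0}$. So in the case $\mathcal{E}=\mathcal{U}$ the start state lies in the $-1$ eigenspace, not the $+1$ eigenspace, and ``accept iff the estimate is $0$'' would always reject. Either drop the leading minus sign (equivalently, mark $I-\Pi$ as the good subspace), or accept iff the phase reads $\pi$ with $M$ even.
\item The BHMT bound you quote mixes amplitude and probability. It actually bounds $\abs{\tilde q-q}$ for the probability $q=a^2$ by $2\pi\sqrt{q(1-q)}/M+\pi^2/M^2$. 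Your direct phase-estimation argument already covers this, so nothing downstream depends on the misquote.
\end{itemize}
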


To prove \cref{thm-1526}, we need the quantum amplitude estimation \cite{BHMT02}.  
Here, for convenience, we use the version adapted from \cite{Wan24}. 

\begin{theorem}[{\cite[Theorem III.4]{Wan24}}] \label{thm:ae}
    Let $U$ be a unitary operator such that 
    \[
    U \ket{0}_{\mathsf{A}} \ket{0}_{\mathsf{B}} = \sqrt{1-p} \ket{0}_{\mathsf{A}} \ket{\phi_0}_{\mathsf{B}} + \sqrt{p} \ket{\perp}_{\mathsf{A}} \ket{\phi_1}_{\mathsf{B}},
    \]
    where $p \in \sbra{0, 1}$, $\ket{0}_{\mathsf{A}} \perp \ket{\perp}_{\mathsf{A}}$, and $\ket{\phi_0}$ and $\ket{\phi_1}$ are normalized pure states. 
    Then, there is a quantum query algorithm $\textup{\texttt{SqrtAmplEst}}\rbra{\varepsilon, \delta, U}$ that estimates $\sqrt{p}$ to within additive error $\varepsilon$ with success probability $\geq 1 - \delta$ using $O\rbra{\log\rbra{1/\delta}/\varepsilon}$ queries to controlled-$U$ and controlled-$U^\dag$. 
\end{theorem}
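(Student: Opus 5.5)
The plan is to instantiate the standard amplitude-estimation argument of \cite{BHMT02}, read off the rotation angle rather than its square, and then boost the success probability with a median trick.

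\textbf{Step 1: the Grover iterate and its rotation.} Let $S_{\chi} \coloneqq (I - 2\ketbra{0}{0})_{\mathsf{A}} \otimes I_{\mathsf{B}}$ up to sign, i.e., the reflection that flips $\ket{\perp}_{\mathsf{A}}$ relative to $\ket{0}_{\mathsf{A}}$. Let $S_0 \coloneqq I - 2\ketbra{0}{0}_{\mathsf{A}}\otimes\ketbra{0}{0}_{\mathsf{B}}$, and define $G \coloneqq -U S_0 U^\dag S_{\chi}$. Write $\sqrt{p} = \sin\theta$ with $\theta\in[0,\pi/2]$, and set
\[
\ket{g} \coloneqq \ket{0}_{\mathsf{A}}\ket{\phi_0}_{\mathsf{B}}, \qquad \ket{b} \coloneqq \ket{\perp}_{\mathsf{A}}\ket{\phi_1}_{\mathsf{B}}.
\]
A direct two-dimensional computation then shows:
\begin{itemize}
\item the plane spanned by $\ket{g}$ and $\ket{b}$ is invariant under $G$;
\item $G$ acts on this plane as a rotation by angle $2\theta$;
\item $U\ket{0}\ket{0} = \cos\theta\ket{g}+\sin\theta\ket{b}$ is a uniform superposition of the two eigenvectors of $G$, with eigenvalues $e^{\pm 2i\theta}$.
\end{itemize}
Each application of controlled-$G$ costs one query to controlled-$U$ and one to controlled-$U^\dag$. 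The reflections are query-free.

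\textbf{Step 2: phase estimation.} Run quantum phase estimation of $G$ on input $U\ket{0}\ket{0}$ with $M = \Theta(1/\varepsilon)$ controlled powers, so $O(1/\varepsilon)$ queries in total. With probability at least $8/\pi^2$ this returns $\tilde\varphi$ within $2\pi/M$ (mod $2\pi$) of either $2\theta$ or $-2\theta$. Set $\tilde\theta \coloneqq \tilde\varphi/2 \in [0,\pi)$ and output $|\sin\tilde\theta|$.

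This output is insensitive to the sign ambiguity: the branch $-2\theta$ gives $\tilde\theta \approx \pi-\theta$, and $\sin(\pi-\theta)=\sin\theta$. The wrap-around at $0$ and $\pi$ is harmless for the same reason. Since $\sin$ is $1$-Lipschitz, $\abs{|\sin\tilde\theta| - \sqrt{p}} \le \pi/M \le \varepsilon$ for a suitable constant in $M$.

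The key point is that the error is additive in $\sqrt{p}$ directly, not in $p$. This is exactly why the bound is $O(1/\varepsilon)$ with no dependence on $p$.

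\textbf{Step 3: boosting.} Repeat Step 2 independently $O(\log(1/\delta))$ times and output the median. Each run is $\varepsilon$-accurate with probability at least $8/\pi^2 > 1/2$. By a Chernoff bound, more than half of the runs are accurate except with probability $\delta$, and in that event the median is also within $\varepsilon$ of $\sqrt{p}$. The total cost is $O(\log(1/\delta)/\varepsilon)$ queries to controlled-$U$ and controlled-$U^\dag$.

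The main obstacle I expect is the bookkeeping in Step 1 and at the start of Step 2. This means fixing the signs in $G$ so that the rotation angle is exactly $2\theta$, and verifying that both eigenphase branches and the mod-$2\pi$ wrap-around yield the same estimate after taking $|\sin(\cdot)|$. I would handle this by explicitly writing the $2\times 2$ matrix of $G$ in the basis $\{\ket{g},\ket{b}\}$.
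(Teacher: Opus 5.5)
Your proposal is correct, and it is essentially the standard argument: the paper gives no proof of its own, citing \cite{Wan24} for this square-root variant of the amplitude estimation of \cite{BHMT02}, and your route is the standard way that variant is obtained — phase estimation of the Grover iterate, outputting $|\sin\tilde\theta|$ rather than $\sin^2\tilde\theta$ so that the $\pi/M$ error is additive in $\sqrt{p}$, then a median over $O(\log(1/\delta))$ runs. On the sign you flagged, take $S_\chi=(2\ketbra{0}{0}-I)_{\mathsf A}\otimes I_{\mathsf B}$ (the reflection negating $\ket{\perp}$), so that $G$ acts as rotation by $2\theta$ on $\mathrm{span}\{\ket{g},\ket{b}\}$; the literal $(I-2\ketbra{0}{0})_{\mathsf A}\otimes I_{\mathsf B}$ combined with the leading minus gives minus that rotation, and $|\sin\tilde\theta|$ would then estimate $\sqrt{1-p}$ instead.
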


\begin{algorithm}[t]
\caption{Source-Code Certification: \texttt{SourceCodeCert}$(\varepsilon,\delta,\mathcal{E},\mathcal{U})$}\label{alg-1457}
    \begin{algorithmic}[1]
    \Require Precision $\varepsilon$, fail probability $\delta$, query access to the source code $W$ of the unknown channel $\mathcal{E}$ such that $\tr_{\mathsf{B}}\rbra{W \rbra{\rho_{\mathsf{A}} \otimes \ketbra{0}{0}_{\mathsf{B}}} W^\dag} = \mathcal{E}\rbra{\rho}$ for any state $\rho$, and classical description of the unitary channel $\mathcal{U}$ (with $U$ denoting its corresponding unitary operator).
    \Ensure Output either $\mathsf{accept}$ ($\mathcal{E}=\mathcal{U}$) or $\mathsf{reject}$ ($\|\mathcal{E}-\mathcal{U}\|_\diamond\geq \varepsilon$).

    \State Let $U_{\Phi}$ be the state-preparation circuit of the maximally entangled state $\ket{\Phi}_{\mathsf{AC}}$. 
    
    \State Let $V = \rbra{U_{\Phi}^\dag \otimes I_{\mathsf{B}}} \cdot \rbra{ \rbra{ \rbra{U^\dag_{\mathsf{A}} \otimes I_{\mathsf{B}}} \cdot W_{\mathsf{AB}}} \otimes I_{\mathsf{C}} } \cdot \rbra{U_{\Phi} \otimes I_{\mathsf{B}}}$.

    \State $p \gets \texttt{SqrtAmplEst}\rbra{\frac{\varepsilon}{16\sqrt{d}}, \delta, V}$. \Comment{Use \cref{thm:ae}}

    \If {$p < \frac{\varepsilon}{4\sqrt{d}}$} 
    \State \Return $\mathsf{accept}$.
    \Else
    \State \Return $\mathsf{reject}$.
    \EndIf
    
    \end{algorithmic}
\end{algorithm}

\begin{proof}[Proof of \cref{thm-1526}]
    The construction of $V$ follows Lines \ref{line1} and \ref{line2} in \cref{alg-12231608}, and we have 
    \[
    \Abs*{ \bra{0}_{\mathsf{AC}} V_{\mathsf{ABC}} \ket{0}_{\mathsf{ABC}} }^2 = \tr\!\left(\ketbra{\Phi}{\Phi}\rbra{(\mathcal{U}^{-1}\circ\mathcal{E})\otimes \mathcal{I}}(\ketbra{\Phi}{\Phi})\right) = \mathrm{F}_{\textup{ent}}(\mathcal{E},\mathcal{U}).
    \]
    Therefore, by \cref{thm:ae}, $p$ is an estimate of $\sqrt{1 - \mathrm{F}_{\textup{ent}}(\mathcal{E},\mathcal{U})}$ such that
    \[
    \Pr\sbra*{ \abs*{p - \sqrt{1 - \mathrm{F}_{\textup{ent}}(\mathcal{E},\mathcal{U})}} \leq \frac{\varepsilon}{16\sqrt{d}} } \geq 1 - \delta.
    \]
    
    Then, we follow the argument in the proof of \cref{thm-222140}.
    If $\mathcal{E}=\mathcal{U}$, then $\mathrm{F}_{\textup{ent}}(\mathcal{E},\mathcal{U}) = 1$, and thus $p < \frac{\varepsilon}{4\sqrt{d}}$ with probability $\geq 1 - \delta$, in which case \cref{alg-1457} outputs ``$\mathcal{E}=\mathcal{U}$'' with probability $\geq 1 - \delta$. 
    Otherwise, if $\|\mathcal{E}-\mathcal{U}\|_\diamond \geq \varepsilon$, then by \cref{lemma-1181906},  
\[\sqrt{1-\mathrm{F}_{\textup{ent}}(\mathcal{E},\mathcal{U})} \geq \sqrt{\frac{1}{8d}\|\mathcal{E}-\mathcal{U}\|_\diamond^2} \geq \frac{\varepsilon}{\sqrt{8d}},\]
and thus $p > \frac{\varepsilon}{4\sqrt{d}}$ with probability $\geq 1 - \delta$, in which case outputs $\mathsf{reject}$ with probability $\geq 1 - \delta$. 

    To complete the proof, the query complexity is $O\rbra{\sqrt{d}\log\rbra{1/\delta}/{\varepsilon}}$ due to the use of \cref{thm:ae}. 
\end{proof}

\subsection{Lower bound}

The upper bound in \cref{thm-1526} matches the lower bound for quantum channel certification to identity in \cite{JO26}. 
Here, note that quantum channel certification to identity is a special case of quantum channel certification to unitary. 

\begin{lemma} [{\cite[Theorem 2]{JO26}}]
    Let $\mathcal{E}$ be an unknown $d$-dimensional unitary quantum channel and $\varepsilon \in \rbra{0, 1/2}$. 
    Then, it requires $\Omega\rbra{\sqrt{d}/\varepsilon}$ to $\mathcal{E}$ and $\mathcal{E}^{-1}$ to distinguish the cases: (i) $\mathcal{E} = \mathcal{I}$ or (ii) $\Abs{\mathcal{E}-\mathcal{I}}_\diamond \geq \varepsilon$, with probability $\geq 2/3$. 
\end{lemma}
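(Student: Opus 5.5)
This lemma is quoted from \cite[Theorem 2]{JO26}, so strictly speaking it needs no new proof. Still, I would reconstruct it by a reduction from unstructured search with an $\varepsilon$-scaled marked phase. Set $\theta \coloneqq \varepsilon$, up to a constant factor. For each $k\in\{0,\dots,d-1\}$, let $U_k \coloneqq I + (e^{i\theta}-1)\ketbra{k}{k}$. This is a phase oracle that rotates only the marked basis state by angle $\theta$.

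\textbf{Step 1: $U_k$ is $\varepsilon$-far from $\mathcal{I}$.} For unitary channels, the diamond distance equals $2\sqrt{1-\min_{z\in \mathrm{conv}(\mathrm{spec})}|z|^2}$. The spectrum of $U_k$ is $\{1,e^{i\theta}\}$, so the closest point of the convex hull to the origin is the midpoint of the chord, at distance $\cos(\theta/2)$. Hence $\Abs{\mathcal{U}_k-\mathcal{I}}_\diamond = 2\sin(\theta/2) \ge \varepsilon$ once $\theta$ is chosen as a suitable constant multiple of $\varepsilon$; this is valid because $\varepsilon<1/2$. Any certifier therefore distinguishes $\mathcal{I}$ from every $\mathcal{U}_k$, and in particular identifies whether some marked $k$ exists.

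\textbf{Step 2: query lower bound via a hybrid argument in the style of BBBV.} Consider an algorithm making $T$ queries to $U_k$ or $U_k^\dagger$, possibly controlled and acting on ancillas. Let $\ket{\psi_t}$ be its state after $t$ queries when run on the identity. Replacing the $t$-th query to $I$ by a query to $U_k$ changes the final state by at most $\Abs{(U_k - I)\ket{\psi_t}} = |e^{i\theta}-1|\,\Abs{\Pi_k\ket{\psi_t}} \le \theta\,\Abs{\Pi_k\ket{\psi_t}}$, where $\Pi_k$ projects onto $\ket{k}$ on the queried register. The same bound holds for $U_k^\dagger$ and for controlled versions. Summing over $t$ gives a final-state distance of $\theta\sum_t \Abs{\Pi_k\ket{\psi_t}}$. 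Summing over $k$ and applying Cauchy--Schwarz, together with $\sum_k \Abs{\Pi_k\ket{\psi_t}}^2\le 1$, yields
\[
\sum_k \theta\sum_{t}\Abs{\Pi_k\ket{\psi_t}} \le \theta\, T\sqrt{d}.
\]
Success probability $2/3$ requires a constant distance for every $k$, so the sum over $k$ must be $\Omega(d)$. This forces $T=\Omega(\sqrt{d}/\theta)=\Omega(\sqrt{d}/\varepsilon)$.

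\textbf{Main obstacle.} The delicate step is the hybrid argument in Step 2 once inverse and controlled queries are allowed, since these are what make source-code access strong. I expect it to go through because $U_k^\dagger - I$ and $\mathrm{ctrl}\text{-}U_k - I$ are also supported on $\ket{k}$ with norm at most $\theta$. I would check this explicitly to confirm that the $\sqrt{d}$ factor survives.
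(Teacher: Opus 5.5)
Your proposal is correct. It differs from the paper only in that the paper gives no proof here: the lemma is imported as a black box from \cite[Theorem 2]{JO26}, solely to show that the $O(\sqrt{d}/\varepsilon)$ source-code algorithm is tight.

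Your self-contained reconstruction checks out. For $d\ge 2$ the weak phase oracle $U_k=I+(e^{i\theta}-1)\ketbra{k}{k}$ has $\Abs{\mathcal{U}_k-\mathcal{I}}_\diamond=2\sin(\theta/2)$, so a concrete choice such as $\theta=2\varepsilon$ works; $\theta=\varepsilon$ itself falls slightly short. The hybrid step also survives inverse and controlled queries, as you expected: $U_k^{\dagger}-I$ and $\ketbra{1}{1}\otimes(U_k^{\pm 1}-I)$ are $(e^{\pm i\theta}-1)$ times subprojections of $\Pi_k$. You should state the last step explicitly. The acceptance probabilities on $\mathcal{I}$ and $\mathcal{U}_k$ must differ by at least $1/3$, and the trace distance between pure states is at most their Euclidean distance. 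Hence $\theta\sum_t\Abs{\Pi_k\ket{\psi_t}}\ge 1/3$ for every $k$. Summing over $k$ with Cauchy--Schwarz gives $d/3\le\theta T\sqrt{d}$, i.e.\ $T\ge\sqrt{d}/(3\theta)$.

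Your version buys transparency. It shows the bound holds even with controlled access to $U_k^{\pm1}$, hence directly in this paper's source-code model with $W=U_k\otimes I_{\mathsf{B}}$. It also sits nicely beside the paper's coherent lower bound via the $p$-faulty Grover channel of \cite{RS08}. A weak \emph{unitary} phase can be coherently amplified, since $\Theta(1/\theta)$ repetitions of $U_k$ give a constant phase on $\ket{k}$. So your instance can only force $\Omega(\sqrt{d}/\varepsilon)$. The incoherent mixing in the faulty oracle instead blocks amplification and forces $\Omega(d/\varepsilon)$, unless one can purify it, which is exactly what source-code access provides. The citation route buys brevity and whatever extra generality \cite{JO26} establishes.
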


\bibliographystyle{alphaurl}
\bibliography{main}

\end{document}